\numberwithin{equation}{section}
\theoremstyle{plain}
\newtheorem{prop}{Proposition}
\begin{document}

\begin{frontmatter}
\title{Model Selection-Based Estimation for Generalized Additive Models Using Mixtures of g-priors: Towards Systematization}

\runtitle{Generalized Additive Models Using Mixtures of g-priors}

\begin{aug}
\author{\fnms{Gyeonghun} \snm{Kang}\thanksref{addr1}}
\and
\author{\fnms{Seonghyun} \snm{Jeong}\thanksref{addr2,addr3,t1}}

\runauthor{G. Kang and S. Jeong}

\address[addr1]{Department of Statistical Science, Duke University, Durham, North Carolina, USA
}

\address[addr2]{Department of Statistics and Data Science, Yonsei University, Seoul, Korea
}

\address[addr3]{Department of Applied Statistics, Yonsei University, Seoul, Korea
}

\thankstext{t1}{Corresponding author: sjeong@yonsei.ac.kr}

\end{aug}

\begin{abstract}	
We explore the estimation of generalized additive models using basis expansion in conjunction with Bayesian model selection. Although Bayesian model selection is useful for regression splines, it has traditionally been applied mainly to Gaussian regression owing to the availability of a tractable marginal likelihood. We extend this method to handle an exponential family of distributions by using the Laplace approximation of the likelihood. Although this approach works well with any Gaussian prior distribution, consensus has not been reached on the best prior for nonparametric regression with basis expansions. Our investigation indicates that the classical unit information prior may not be ideal for nonparametric regression. Instead, we find that mixtures of g-priors are more effective. We evaluate various mixtures of g-priors to assess their performance in estimating generalized additive models. Additionally, we compare several priors for knots to determine the most effective strategy. Our simulation studies demonstrate that model selection-based approaches outperform other Bayesian methods.
\end{abstract}

\begin{keyword}[class=MSC]
\kwd[Primary ]{62G08}
\kwd[; secondary ]{62J12}
\end{keyword}

\begin{keyword}
\kwd{Bayesian nonparametrics}
\kwd{exponential family models}
\kwd{mixtures of g-priors}
\kwd{nonparametric regression}
\kwd{regression splines}
\end{keyword}

\end{frontmatter}

\section{Introduction}
\label{sec:intro}

Since its inception, the generalized additive model (GAM) has been pivotal in statistics and machine learning, garnering significant attention from both theorists and practitioners.
The GAM represents an interpretable semiparametric approach that balances between parametric generalized linear models (GLMs) and fully nonparametric regression with multidimensional smoothing.  Specifically, the GAM describes the relationship between multiple predictor variables and a (possibly non-Gaussian) response variable through an additive structure of univariate functions \citep{hastie1986generalized}. This approach sacrifices the flexibility of multidimensional smoothing for a clear interpretation of each predictor variable's contribution to the mean as a univariate function. 

Several estimation methods have been proposed for nonparametric regression and additive models, from both frequentist and Bayesian perspectives. Common Bayesian techniques for estimating univariate smooth functions include Gaussian process priors \citep{williams1995gaussian}, Bayesian P-splines \citep{lang2004bayesian}, and basis expansion methods with model selection \citep{smith1996nonparametric,denison1998automatic,dimatteo2001bayesian}. Among these approaches, basis expansion with Bayesian model selection (BMS), which we call the BMS-based approach to nonparametric regression, stands out for its theoretical benefits and empirical success \citep{smith1996nonparametric,denison1998automatic,dimatteo2001bayesian,rivoirard2012posterior,de2012adaptive,shen2015adaptive}. BMS-based approaches determine suitable basis functions by comparing Bayes factors, thereby selecting more plausible basis terms in a data-driven manner. These methods are also useful for multidimensional smoothing, as seen in Bayesian multivariate adaptive regression splines \citep{denison1998bayesian} and Bayesian additive regression trees \citep{chipman2010bart,jeong2023art}.

Despite their conceptual simplicity, BMS-based methods can be computationally challenging owing to the need for marginal likelihood calculations. This limitation has typically restricted the application of BMS to Gaussian regression within nonparametric regression contexts. For GLMs and GAMs, marginalization is often impractical even with conjugate priors on the coefficients \citep{chen2003conjugate}.
The most feasible scenario often involves cases with available latent variable expressions, such as probit regression \citep[e.g.,][]{jeong2017analysis,sohn2022functional}.
When marginalization is not analytically tractable, BMS-based methods require numerical marginalization of the coefficients using Markov chain Monte Carlo (MCMC) algorithms, such as reversible jump MCMC \citep{green1995reversible}. These methods can be significantly less efficient than using Bayes factors unless a well-designed proposal distribution is available. This challenge has contributed to the early preference for P-spline-based Bayesian methods for estimating GAMs \citep[e.g.,][]{fahrmeir2001bayesian,brezger2006generalized}.

A practical solution to this issue is to use an approximation of the likelihood, such as the Laplace approximation, which allows for the calculation of marginal likelihood with a Gaussian prior distribution on the coefficients  \citep{li2018mixtures}. This approach enables the application of BMS-based methods to estimate GAMs with distributions from the exponential family. While the Laplace approximation has been occasionally used in BMS-based methods \citep[e.g.,][]{dimatteo2001bayesian}, it is more widely accepted in the literature for improving computational efficiency in Bayesian P-splines for GAM estimation	\citep{bove2015objective,gressani2021laplace}. 

When a Gaussian or Gaussian mixture prior is used for the coefficients, the Laplace approximation allows for a straightforward derivation of a closed-form expression for the marginal likelihood.  However, the optimal prior distribution for basis determination remains unclear. Literature on variable selection indicates that mixture priors often outperform the classical Gaussian prior, known as Zellner's g-prior, and its variants \citep{liang2008mixtures,li2018mixtures}. Such mixture priors, also known as mixtures of g-priors, are preferred because of their desirable properties and ability to resolve issues associated with the g-prior \citep{liang2008mixtures}.
Various mixtures of g-priors have been proposed within the framework of linear regression \citep[e.g.,][]{zellner1980posterior,liang2008mixtures,maruyama2011fully,bayarri2012criteria,womack2014inference}, and some attempts have been made to extend them to GLMs \citep{bove2011hyper,held2015approximate,fouskakis2018power}. Recently, \citet{li2018mixtures} provided a comprehensive framework for mixtures of g-priors for the GLM. 
However, the best-performing mixture prior for the BMS-based GAM estimation remains uncertain. To address this, understanding how mixtures of g-priors affect the penalization of nonparametric functions is essential. Even within Gaussian additive regression, determining the best mixture prior remains unresolved.

Another important consideration in BMS-based methods is selecting a prior distribution for the intrinsic basis terms. Since spline basis functions are often determined by the knot locations, this implies a prior on the knots. Various prior distributions have been proposed to balance computational efficiency and estimation quality  \citep[e.g.,][]{smith1996nonparametric,denison1998automatic,dimatteo2001bayesian,shen2015adaptive}. These priors can be categorized based on their underlying principles; generally, more flexible priors offer better approximation but come with greater computational costs. The most suitable class of prior distribution for determining optimal spline knots remains unclear.

This study makes three key contributions. First, we systematize BMS-based approaches for GAM estimation by using the Laplace approximation and a unified framework for mixtures of g-priors, as proposed by \citet{li2018mixtures}. In doing so, we enhance computational efficiency by introducing a new form of natural cubic spline function specifically tailored for BMS-based methods. Second, among various mixtures of g-priors within a general class, we identify the default mixture prior for GAM estimation. We deepen our understanding of how mixtures of g-priors penalize the model during GAM estimation and evaluate the empirical performance of different mixture priors through extensive simulations. Our findings suggest that the traditional g-prior, also known as the unit information prior \citep{kass1995reference}, may be less suitable. Instead, a mixture of g-priors is recommended. Finally, we categorize prior distributions for knots into three groups and assess which class is most effective for GAM estimation. Our investigation reveals that a prior distribution balancing flexibility and computational efficiency performs best. Specifically, while the most flexible prior, the free-knot spline \citep{denison1998automatic,dimatteo2001bayesian}, may be excessive in practice, a less flexible but computationally efficient approach based on variable selection \citep{smith1996nonparametric} yields better empirical results with fast mixing. We support these findings with various numerical results. The R package implementing the sampling algorithms for our GAM estimation is available on the first author's GitHub page.\footnote{https://github.com/hun-learning94/gambms}

The remainder of this paper is organized as follows. Section~\ref{sec2} introduces the construction of GAMs using spline basis expansion with natural cubic splines. Section~\ref{sec:mixgprior} discusses mixtures of g-priors for BMS within a unified framework and compares these priors for GAM estimation, interpreting them as penalty functions for nonparametric regression. Section~\ref{sec:knotprior} categorizes prior distributions for knots into three strategies and evaluates their effectiveness for BMS-based approaches. Section~\ref{sec:sims} presents comprehensive simulations and numerical studies to identify the best prior distribution and compare BMS-based methods with other approaches for GAM estimation. 
Section~\ref{sec:realdata} applies the BMS-based method to the Pima diabetes dataset.
Finally, Section~\ref{sec:discussion} concludes the study with a discussion. Supplementary material includes proofs of propositions, additional simulation studies, and instructions for installing the R package.

\section{Generalized additive models via basis expansion}
\label{sec2}

For given predictor variables $x_{i}=(x_{i1},x_{i2},\dots,x_{ip})^{T}\in\mathbb{R}^{p}$, suppose the response variable $Y_i\in\mathbb R$ follows a distribution from the exponential family. The density of $Y_i$ is given by
\begin{align}\label{eqn:glm}
	y_i\mapsto p(y_{i};\theta_i,\phi)=\exp\bigg(\dfrac{y_{i}\theta_{i}-b(\theta_{i})}{\phi}+c(y_{i},\phi)\bigg),\quad i=1,\dots,n,
\end{align}
where $\theta_{i}$ is the natural parameter modeled by $x_i$, $\phi$ is a scale parameter, and $b$ and $c$ are known functions. The dependence of $\theta_i$ on $x_i$ is clarified below. Although we focus primarily on cases where the dispersion parameter $\phi$ is known, we also consider Gaussian regression with an unknown $\phi$ in Section~S5 of the supplementary material.
Assuming $b$ is twice differentiable with $b''(\theta_i)>0$, the expected value and variance of $Y_i$ are $E(Y_{i})=b'(\theta_{i})$ and $Var(Y_{i})=\phi b''(\theta_{i})$, respectively. We use a monotonically increasing link function $h$ to parameterize the natural parameter as $\theta_{i}= (h\circ b')^{-1}(\eta_i)$, where $\eta_i$ is an additive predictor defined as
\begin{align}\label{eqn:gam}
	\eta_{i}=\alpha +\sum_{j=1}^{p}f_{j}(x_{ij}) ,\quad i=1,\dots, n,
\end{align}
with a global mean $\alpha$ and univariate functions $f_j:\mathbb R\to \mathbb R$, $j=1,\dots,p$. To ensure identifiability, we assume that the functions $f_j$ satisfy the restriction $\sum_{i=1}^n f_j(x_{ij})=0$, $j=1,\dots,p$. 

The key aspect of the model specification is determining how to characterize the nonparametric functions $f_j$.
In this study, the functions $f_j$ are parameterized using a spline basis representation. Specifically, $f_j$ are expressed as linear combinations of $K_j$ basis functions $b_{j1},\dots,b_{jK_j}$; that is, with coefficients $\beta_{jk}\in\mathbb R$,
$$
f_j(\cdot)=\sum_{k=1}^{K_j}\beta_{jk}b_{jk}(\cdot),\quad j=1,\dots,p.
$$
To satisfy the identifiability condition $\sum_{i=1}^n f_j(x_{ij})=0$, we assume that each basis function satisfies $\sum_{i=1}^n b_{jk}(x_{ij})=0$, $j=1,\dots,p$. This can be achieved by centering unrestricted basis functions 
$b_{jk}^\ast$ as 
\begin{align}
	b_{jk}(\cdot) = b_{jk}^\ast(\cdot) - \frac{1}{n}\sum_{i=1}^n b_{jk}^\ast(x_{ij}), \quad j=1,\dots,p,\quad k=1,\dots, K_j.
	\label{eqn:basiscentering}
\end{align}
Let $B_j\in\mathbb R^{n\times K_j}$ be the matrix whose $(i,k)$th component is $b_{jk}(x_{ij})$. The centering procedure is achieved by the projection $B_j=(I_n-n^{-1}1_n1_n^T)B_j^\ast$ with the unrestricted basis matrix $B_j^\ast$ defined with $b_{jk}^\ast$ for its $(i,k)$th component.
We define  $B=[B_1,\dots,B_p]\in\mathbb R^{n\times J}$ and a vector of full coefficients $\beta=(\beta_{11},\dots,\beta_{1K_1}, \dots ,\beta_{p1},\dots,\beta_{pK_p})^T\in\mathbb R^{J}$, where $J=\sum_{j=1}^p K_j$. 
The vector of additive predictors $\eta=(\eta_1,\dots,\eta_n)^T$ can then be written as $\eta=\alpha 1_n + B\beta$.

Various classes of basis functions can be used to estimate smooth functions. In this study, we employ natural cubic spline basis functions to avoid erratic behavior near the boundaries. This approach is equivalent to using any piecewise polynomial basis function (including B-splines) with appropriate natural boundary conditions, provided that the prior distribution remains invariant under linear bijections of the design matrix.
For boundary knots $\{t^L,t^U\}$ and a set of $M$ interior knots $\{t_1,\dots,t_M\}$ satisfying $-\infty<t^L<t_1<\dots<t_M<t^U<\infty$, we define the natural cubic spline basis functions $N_k:\mathbb R\to \mathbb R$, $k=1,\dots,M+1$, as follows:
\begin{align}
	\begin{split}
		N_1(u)&=u, \\
		N_{k+1}(u)&= N( u; t^L, t^U, t_k)\\
		&\equiv\dfrac{(u-t_k)_+^3 - (u-t^U)_+^3}{t^U - t_k} - \dfrac{(u-t^L)_+^3 - (u-t^U)_+^3}{t^U - t^L},\quad k=1,\dots,M.
	\end{split}
	\label{eqn:ncs}
\end{align}
Combined with the constant term $N_0(u)=1$, the basis functions in \eqref{eqn:ncs} generate piecewise cubic functions. These functions are linear beyond the boundary knots $\{t^L,t^U\}$, enhancing stability near the boundaries owing to the constraints imposed at $\{t^L,t^U\}$. The constant term is excluded from \eqref{eqn:ncs}, as it is redundant given the intercept term. Our findings, consistent with well-known observations, show that natural cubic splines significantly reduce estimation bias near boundaries compared to cubic splines without natural conditions.

Although the basis construction in \eqref{eqn:ncs} is based on a truncated power series, our definition differs slightly from the truncated power natural cubic splines typically used in the literature, such as those in Equations (5.4) and (5.5) of \citet{hastie2009elements}. The basis terms in \eqref{eqn:ncs} span the same piecewise cubic polynomial space with natural boundary conditions, as demonstrated. However, our definition in \eqref{eqn:ncs} has an additional advantageous property: inserting a new knot-point $t_\ast \in(t^L ,t^U)$ simply adds a new basis term $N( \cdot; t^L, t^U, t_\ast)$ to the set $\mathcal N=\{N_k,k=0,1,\dots,M+1\}$ without altering the existing basis terms in $\mathcal N$. Similarly, removing a knot-point simply deletes an existing basis term in $\mathcal N$.
This feature may not be present in other natural cubic spline basis functions, such as the natural cubic B-spline basis or those in Equations (5.4) and (5.5) of \citet{hastie2009elements}, where a single basis term might depend on more than two knots, and adding or removing a knot-point could alter other basis terms. This characteristic makes the basis terms in \eqref{eqn:ncs} more attractive for model selection-based approaches (see Sections~\ref{sec:prior2} and \ref{sec:prior3}) because it allows faster computation by reducing the time spent expanding the design matrix at each iteration. For related simulation results, see Section~S7 of the supplementary material.
To the best of our knowledge, this is the first study to use the form of natural cubic splines in \eqref{eqn:ncs}. These properties are formalized as follows.

\begin{prop} 
	The set $\mathcal N=\{N_k,k=0,1,\dots,M+1\}$ is a basis for the cubic spline space with natural boundary conditions.
	\label{prop:ncspace}
\end{prop}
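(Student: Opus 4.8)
The plan is to exhibit $\mathcal N$ as a set of $M+2$ linearly independent elements of the target space and then to close the argument by a dimension count. First I would recall the standard fact that the space of $C^2$ piecewise cubic functions with knots at $t^L<t_1<\dots<t_M<t^U$ has dimension $(M+2)+4$, a basis being the truncated power system $1,u,u^2,u^3,(u-t^L)_+^3,(u-t_1)_+^3,\dots,(u-t_M)_+^3,(u-t^U)_+^3$. Imposing the two natural boundary conditions—affinity on $(-\infty,t^L)$ and on $(t^U,\infty)$—removes four degrees of freedom, two at each end, since the quadratic and cubic coefficients of the outer polynomial pieces must vanish. This pins the dimension of the natural cubic spline space at exactly $M+2=\#\mathcal N$, so it suffices to establish (i) $\mathcal N$ lies in the space and (ii) $\mathcal N$ is linearly independent.

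For (i), the terms $N_0(u)=1$ and $N_1(u)=u$ are globally affine and hence trivially natural cubic splines. For $N_{k+1}$ I would note that it is a $C^2$ piecewise cubic by construction, as each $(u-\xi)_+^3$ is $C^2$, so only the natural boundary conditions require verification. On $(-\infty,t^L)$ every truncated power vanishes, so $N_{k+1}\equiv 0$ is affine. On $(t^U,\infty)$ I would drop the truncation and apply the factorization $a^3-b^3=(a-b)(a^2+ab+b^2)$, with $a-b=t^U-t_k$ and $a-b=t^U-t^L$ respectively, to cancel each denominator; the common $(u-t^U)^2$ terms then cancel and the remaining quadratic contributions cancel as well, leaving $N_{k+1}$ affine in $u$. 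This cancellation on the right boundary is the one genuinely computational step and is where I expect the main effort to lie.

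For (ii), I would use the jump in the third derivative as a separating functional. Since the third derivative of $(u-\xi)_+^3$ is $6\,\mathbbm 1[u>\xi]$, with a jump of size $6$ at $\xi$, the only knot at which $N_{k+1}$ has a nonzero third-derivative jump not shared by the boundary knots $t^L,t^U$ is the interior knot $t_k$, where the jump equals $6/(t^U-t_k)$. Because the interior knots are distinct and distinct from the boundary knots, no other member of $\mathcal N$ contributes a jump at $t_k$. Applying this jump functional at $t_k$ to a vanishing combination $\sum_{k} c_k N_k\equiv 0$ therefore forces the coefficient of $N_{k+1}$ to be zero for each $k=1,\dots,M$; the residual identity $c_0+c_1u\equiv 0$ then yields $c_0=c_1=0$ by the independence of $1$ and $u$.

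Finally I would conclude that a linearly independent set of $M+2$ elements inside a vector space of dimension $M+2$ is a basis, which is precisely the claim of Proposition~\ref{prop:ncspace}.
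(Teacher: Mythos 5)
Your proof is correct, but it follows a genuinely different route from the paper's. The paper proves the claim by a change-of-basis argument: it takes the standard truncated-power natural cubic spline basis $\mathcal N^\ast$ of \citet{hastie2009elements} (their equations (5.4)--(5.5)) as known, observes the explicit relations $N_0=N_0^\ast$, $N_1=N_1^\ast$, $-N_{M+1}=N_2^\ast$, and $N_{k-1}-N_{M+1}=N_k^\ast$ for $k=3,\dots,M+1$, and concludes by exhibiting the resulting coefficient matrix $Q$ and noting it is nonsingular; hence $\mathcal N$ spans the same space as $\mathcal N^\ast$ and has the same cardinality. You instead argue from first principles: (i) direct verification that each $N_{k+1}$ satisfies the natural boundary conditions (your factorization computation on $(t^U,\infty)$ is right --- the two quotients reduce to $(u-t_k)^2+(u-t_k)(u-t^U)+(u-t^U)^2$ and $(u-t^L)^2+(u-t^L)(u-t^U)+(u-t^U)^2$, whose difference is affine); (ii) linear independence via third-derivative jumps at the interior knots, which cleanly isolates each coefficient because only $N_{k+1}$ jumps at $t_k$; and (iii) a dimension count pinning the natural spline space at $M+2$. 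The trade-off: the paper's argument is shorter and, as a by-product, records the explicit invertible relation between the new and the classical basis (which is itself useful context for the computational claims around Proposition~\ref{prop:selection}), but it inherits its validity entirely from the known fact that $\mathcal N^\ast$ is a basis. Your argument does not need $\mathcal N^\ast$ at all, only the standard dimension formula, and it makes the boundary-condition verification explicit rather than implicit --- at the cost of the one nontrivial cancellation computation and the jump-functional argument.
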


\begin{prop} 
	The addition of a new interior knot-point $t_\ast \in(t^L ,t^U)$ introduces the corresponding basis term $N(\cdot;t^L,t^U,t_\ast)$ into $\mathcal N$. Similarly, the elimination of an existing interior knot-point $t_k\in t$ eliminates the corresponding basis term $N(\cdot;t^L,t^U,t_k)$ in $\mathcal N$.
	\label{prop:selection}
\end{prop}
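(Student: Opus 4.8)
The plan is to exploit the explicit closed form of the basis functions in \eqref{eqn:ncs} together with Proposition~\ref{prop:ncspace}. The crucial structural observation is that each basis term $N(\cdot;t^L,t^U,t_k)$ is a function of the running argument $u$, the two boundary knots $t^L,t^U$, and the \emph{single} interior knot $t_k$ only; it does not involve any of the remaining interior knots $t_1,\dots,t_{k-1},t_{k+1},\dots,t_M$. The same is trivially true of $N_0\equiv 1$ and $N_1(u)=u$, which do not depend on the interior knots at all. This locality is precisely what distinguishes the representation \eqref{eqn:ncs} from, for example, the natural cubic B-spline basis, in which a single basis function is supported on a window of several consecutive knots.

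First I would fix the interior-knot configuration $t=\{t_1,\dots,t_M\}$, let $\mathcal N$ denote the associated set in \eqref{eqn:ncs}, and consider the enlarged configuration $t'=t\cup\{t_\ast\}$ obtained by inserting a new point $t_\ast\in(t^L,t^U)$ distinct from the existing knots, with $\mathcal N'$ its associated set. Applying Proposition~\ref{prop:ncspace} to each configuration shows that $\mathcal N$ is a basis for the natural cubic spline space $\mathcal S$ on $t$ and that $\mathcal N'$ is a basis for the natural cubic spline space $\mathcal S'$ on $t'$; in particular $\dim\mathcal S'=\dim\mathcal S+1$.

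Next I would compare the two sets element by element. By the locality observation, every element of $\mathcal N$---namely $N_0$, $N_1$, and each $N(\cdot;t^L,t^U,t_k)$ for $t_k\in t$---appears verbatim in $\mathcal N'$, since none of these expressions is affected by the presence of the extra knot $t_\ast$. The only additional member of $\mathcal N'$ is $N(\cdot;t^L,t^U,t_\ast)$. Hence $\mathcal N'=\mathcal N\cup\{N(\cdot;t^L,t^U,t_\ast)\}$ as a set, which is exactly the asserted \emph{addition} statement; the accompanying fact that this new term is genuinely a new basis direction, and not a linear combination of the old ones, is automatic because Proposition~\ref{prop:ncspace} guarantees that $\mathcal N'$ is linearly independent. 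The \emph{elimination} claim is the mirror image: starting from $\mathcal N'$ on $t'$ and deleting the knot $t_\ast$ returns $\mathcal N$, so removing an existing interior knot $t_k$ removes exactly the single term $N(\cdot;t^L,t^U,t_k)$ and leaves the remaining terms intact.

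I do not anticipate a genuine obstacle here, since the result is essentially a bookkeeping consequence of the closed form together with Proposition~\ref{prop:ncspace}. The only point demanding a little care is to phrase the locality observation precisely, that is, to read off from \eqref{eqn:ncs} that the map $t_k\mapsto N(\cdot;t^L,t^U,t_k)$ produces exactly one basis term per interior knot with no cross-dependence, and to ensure that the inserted point $t_\ast$ keeps the knot sequence strictly ordered inside $(t^L,t^U)$ so that Proposition~\ref{prop:ncspace} indeed applies to the enlarged configuration.
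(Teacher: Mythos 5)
Your proof is correct and rests on exactly the same observation as the paper's: each basis term $N(\cdot;t^L,t^U,t_k)$ in \eqref{eqn:ncs} involves only the boundary knots and the single interior knot $t_k$, so inserting or deleting a knot-point adds or removes precisely one term without disturbing the others. The additional appeal to Proposition~\ref{prop:ncspace} for linear independence is harmless but not needed for the statement as posed.
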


Proofs are provided in Section~S2 of the supplementary material.
We select our basis terms $b_{jk}^\ast$ using the natural cubic spline basis functions defined in \eqref{eqn:ncs}. Specifically, for each $j$, we set the boundary knots to $\xi_j^L=\min_{1\le i\le n}x_{ij}$ and $\xi_j^U=\max_{1\le i\le n}x_{ij}$ based on the observed design points. With a given set of knots $\xi_j=\{\xi_{j1},\dots,\xi_{jL_j}\}$ where $\xi_j^L<\xi_{j1}<\dots<\xi_{jL_j}<\xi_j^U$, the uncentered basis terms are chosen as 
\begin{align}
	b_{j1}^\ast(\cdot)&=N_1(\cdot), \quad
	b_{j,k+1}^\ast(\cdot) = N(\cdot ;\xi_j^L,\xi_j^U, \xi_{jk}),\quad  k=1,\dots,L_j.
	\label{eqn:basis}
\end{align} 

The class of spline functions is highly dependent on the placement of knots $\xi = \{\xi_1,\dots,\xi_p\}$. Therefore, choosing suitable knot locations is crucial for accurately capturing both local and global functional characteristics while avoiding overfitting.
From a Bayesian perspective, a natural approach is to let the data select the most appropriate knots $\xi$ from a predetermined set $\Xi$ using BMS. This approach is well-established in the literature  \citep[e.g.,][]{smith1996nonparametric,denison1998automatic,dimatteo2001bayesian,rivoirard2012posterior,de2012adaptive,shen2015adaptive,jeong2016bayesian,jeong2017analysis}.
A set $\Xi$ can be a countable or uncountable collection of knots. A richer $\Xi$ allows for more flexible estimation of regression spline functions but may result in computational inefficiency. Within the Bayesian framework, specifying $\Xi$ via a predetermined law is akin to assigning a prior distribution to $\xi$ over an infinite-dimensional space with restricted support $\Xi$. The key to success is placing a prior on $\xi$ with appropriately restricted support $\Xi$. 
Several options for specifying a prior for $\xi$ are discussed in Section~\ref{sec:knotprior}.

An additional advantage of the formulation in \eqref{eqn:basis} is its ability to easily characterize a fully linear relationship. Specifically, if $\xi_j$ is empty, the basis consists only of the linear term $b_{j1}^\ast$. This is particularly useful when a predictor variable is binary or assumed to have a linear effect. In such cases, we can assign a point mass prior to empty $\xi_j$. As a result, generalized additive partial linear models (GAPLMs) with both parametric and nonparametric additive terms \citep{wang2011estimation} are naturally accommodated by our construction without modification. Additionally, while not explored in this study, variable selection could be incorporated by introducing additional latent variables for the linear basis term $b_{j1}^\ast$. A related idea is discussed in \citet{jeong2021bayesian}.

One major advantage of BMS-based approaches to nonparametric regression is that they provide model-averaged estimates rather than relying on specific knot locations. Our goal is to examine the model-averaged estimates of a functional $\mathcal L:(\alpha,f_1,\dots,f_p)\mapsto \mathcal L(\alpha,f_1,\dots,f_p)$, which is parameterized by coefficients $\alpha$ and $\beta$. For example, we may be interested in a pointwise evaluation of the additive predictor $\alpha+\sum_{j=1}^p f_j(x_j)$ or the univariate function $f_j(x_j)$, $j=1,\dots,p$, at a given point $x=(x_1,\dots,x_p)^T$.
The model-averaged posterior of a functional is given by
\begin{align}
	\label{eqn:functional}
	\pi\big(\mathcal L(\alpha,f_1,\dots,f_p)\mid Y\big) = \int_{\Xi} \pi\big(\mathcal L(\alpha,f_1,\dots,f_p)\mid \xi, Y\big) d\Pi(\xi\mid Y).
\end{align}
A key aspect of our Bayesian procedure is assigning a prior distribution for model selection and exploring the posterior distribution of $\xi$, $\Pi(\xi\mid Y)$.
To highlight the dependency on $\xi$, we use the notation $B_\xi=B$, $\beta_\xi=\beta$, $J_\xi=J$, and $\eta_\xi=\alpha 1_n + B_\xi\beta_\xi$. Note that $J_\xi = p +\sum_{j=1}^p |\xi_j|$, where $|\xi_j|$ represents the number of knots $\xi_j$, $j=1,\dots,p$.

\section{Mixtures of g-priors for generalized additive models}
\label{sec:mixgprior}

Our main objective is to explore the posterior distribution of a functional $\mathcal L(\alpha,f_1,\dots,f_p)$. To obtain a model-averaged estimate, we need to numerically evaluate the integral in \eqref{eqn:functional}, which involves exploring the posterior distribution $\Pi(\alpha,\beta_\xi,\xi\mid Y)$.
Therefore, we need to specify a prior distribution $\Pi(\alpha,\beta_\xi,\xi)$ over the parameter space. The possible priors for $\xi$, $\Pi(\xi)$, are discussed in Section~\ref{sec:knotprior}. A critical aspect is determining a prior for the knot-specific coefficients $\beta_\xi$, that is, $\Pi(\beta_\xi\mid \xi)$. This study employs mixtures of g-priors for this purpose. In this section, we explain the use of mixtures of g-priors in BMS-based approaches to GAMs and discuss the resulting posteriors. Additionally, we provide a toy example to illustrate how mixed priors penalize GAMs.

\subsection{Mixtures of g-priors for exponential family models}
\label{sec:mixgexp}

We specify the prior distribution as $\Pi(\alpha,\beta_\xi\mid\xi) = \Pi(\alpha)\Pi(\beta_\xi\mid \xi)$. In line with common practice, we assign an improper uniform prior to the intercept parameter $\alpha$, that is,
\begin{align}
	\pi(\alpha)\propto 1.
	\label{eqn:alphaprior}
\end{align}
This improper prior has been justified in the literature \citep{berger1998bayes,bayarri2012criteria}.
Next, we discuss $\Pi(\beta_\xi\mid \xi)$.
For model selection in linear regression, Zellner's g-prior is often preferred owing to its computational efficiency and invariance to linear transformations \citep{zellner1986assessing}. 
In our spline setup, this invariance is particularly valuable because it ensures that the procedure remains unaffected by specific choices of basis functions, as long as the target spline space is correctly generated. Therefore, the invariance property of the g-prior supports the spline basis system defined in \eqref{eqn:basis}.
However, the computational advantage of the g-prior is typically diminished in GAMs because Gaussian priors are not conjugate to non-Gaussian models, making it impossible to obtain a closed-form expression for the marginal likelihood $p(Y \mid  \xi)$.
This complicates the computation of the posterior distribution in \eqref{eqn:functional} owing to the intractability of the marginal likelihood. 
To address this issue, we consider approximating the likelihood using the Laplace approximation with a suitable variant of the g-prior. 

Let $\theta=(h\circ b')^{-1}$, and define $\mathcal{J}_n(\hat{\eta}_\xi) = \text{diag}(- Y_i \theta''(\hat{\eta}_{\xi,i}) + (b \circ \theta)''(\hat{\eta}_{\xi,i}),i=1,\dots, n)$ as the observed information matrix of ${\eta}_\xi$ evaluated at $\hat{\eta}_\xi$ (the Hessian matrix of the negative log-likelihood), where $\hat{\eta}_{\xi} =(\hat{\eta}_{\xi,1},\dots, \hat{\eta}_{\xi,n})^T = \hat\alpha_\xi 1_n + B_\xi \hat\beta_\xi$ with the maximum likelihood estimators $\hat\alpha_\xi$ and $\hat\beta_\xi$ (assuming they exist). 
We focus on cases where $\mathcal{J}_n(\hat{\eta}_\xi)$ is positive definite, which is generally true except in extreme situations like complete separation in logistic regression \citep{li2018mixtures}. Among the variants of the g-prior for exponential family models, we use the form proposed by \citet{li2018mixtures},
\begin{align}\label{eqn:gprior}
	\beta_\xi\mid g, \xi &\sim \text{N}\big(0, g (\tilde B_\xi^T \mathcal{J}_n(\hat{\eta}_\xi)\tilde B_\xi)^{-1} \big), 
\end{align}
where $g>0$ serves as a dispersion factor that controls the influence of the prior, and $\tilde B_\xi=[I_n-\text{tr}(\mathcal{J}_n(\hat\eta_\xi))^{-1}1_n1_n^T \mathcal{J}_n(\hat\eta_\xi)]B_\xi$ is the matrix consisting of the columns of $B_\xi$ centered by the weighted average with the diagonal elements of $\mathcal{J}_n(\hat{\eta}_\xi)$. The prior in \eqref{eqn:gprior} requires that $\tilde B_\xi^T \mathcal{J}_n(\hat{\eta}_\xi)\tilde B_\xi$ be invertible. This condition is satisfied if and only if $B_\xi$ has full-column rank (observe that $\mathcal{J}_n(\hat{\eta}_\xi)$ is positive definite and $\text{rank}(B_\xi)=\text{rank}(\tilde B_\xi)$, where $\text{rank}(\cdot)$ is the rank of a matrix). Thus, a full-column rank condition will be imposed on $\Pi(
\xi)$ in Section~\ref{sec:knotprior}. 
Although the prior in \eqref{eqn:gprior} could be extended using a generalized inverse, we do not pursue this approach here (for further discussion, see Section~2.5 of \citet{li2018mixtures}).

In addition to the prior in \eqref{eqn:gprior}, many other variants of the g-prior exist for exponential family models \citep[e.g.,][]{hansen2003minimum,wang2007adaptive,gupta2009information,bove2011hyper,held2015approximate}. 
We note that the prior in \eqref{eqn:gprior} depends on the observation vector $Y$, which means it does not strictly adhere to the pure Bayesian philosophy. 
Some methods address this issue by using the expected information matrix instead of $\mathcal J_n(\hat{\eta}_\xi)$, while substituting $\eta_\xi=\alpha 1_n$ based on the null model \citep{bove2011hyper, held2015approximate,castellanos2021model, garcia2023model}. 
However, within this framework, the marginal likelihood is not available in closed form unless $\alpha$ is fixed and a specific prior on $g$ is used \citep{bove2011hyper, held2015approximate}. In contrast, the prior in \eqref{eqn:gprior} provides a convenient expression for the approximate marginal likelihood, enabling relatively fast computation.
Moreover, our prior captures
the large-sample covariance structures and local geometry better than other variants of the g-prior \citep{li2018mixtures}.

By integrating the second-order Taylor expansion of the likelihood with the priors specified in \eqref{eqn:alphaprior} and \eqref{eqn:gprior}, we obtain
\begin{align}\label{eqn:fixedg}
	p(Y\mid g, \xi) 
	&\approx p(Y\mid \hat\eta_\xi)\text{tr}(\mathcal{J}_n(\hat{\eta}_\xi))^{-1/2}
	(g+1)^{-{J_\xi}/{2}} 
	\exp \!\left(-\frac{Q_\xi}{2(g+1)}\right),
\end{align}
where $p(Y\mid \hat\eta_\xi)$ represents the likelihood evaluated at $\hat\eta_\xi$ for a given $\xi$ and $Q_\xi = \hat\beta_\xi^T \tilde B_\xi^T \mathcal{J}_n(\hat{\eta}_\xi)\tilde B_\xi\hat\beta_\xi$ is the Wald statistic; see Section~S3 of the supplementary material for the derivation of \eqref{eqn:fixedg}.
The expression in \eqref{eqn:fixedg} shows that when $g$ is treated as a fixed hyperparameter, the marginal likelihood becomes highly sensitive to its value. Determining an appropriate choice for $g$ has been widely discussed in the literature. The most common approach is to set $g=n$, known as the unit information prior \citep{kass1995reference}. 
This concept is also frequently used in the literature on nonparametric regression using BMS \citep[e.g.][]{gustafson2000bayesian, dimatteo2001bayesian, kohn2001nonparametric}.
From a Bayesian perspective, the unit information prior can be viewed as a point mass prior at $g=n$, expressed as $\Pi(g)=\delta_n(g)$, where $\delta_b$ denotes the Dirac measure at $b$.
However, research has shown that using a suitable prior distribution for $g$, known as a mixture of g-priors, enhances empirical performance and addresses paradoxes in BMS \citep{liang2008mixtures,li2018mixtures}.
To unify various mixtures of g-priors, we adopt a general family that encompasses various mixture distributions. Specifically, following \citet{li2018mixtures}, we assign the truncated compound confluent hypergeometric (tCCH) distribution to $(g+1)^{-1}$ \citep{gordy1998generalization}, that is,
\begin{align}
	\frac{1}{g+1} \sim \text{tCCH}\bigg(\frac{a}{2},\frac{b}{2},r,\frac{s}{2},\nu,\kappa\bigg),\quad a,b,\kappa>0, \quad r,s\in\mathbb R,\quad \nu\ge1.
	\label{eqn:tcchprior}
\end{align}
The tCCH distribution is a type of generalized beta distribution characterized by five parameters, which allow it to exhibit multi-modal or long-tailed density. Parameters $a$ and $b$ behave similarly to those in a beta distribution, while parameters $r$, $s$, and $\kappa$ control the skewness of the density. Parameter $\nu$ determines the support of the distribution. For a detailed discussion, including the density function and moments of the tCCH distribution, see Section~S1 of the supplementary material.

\begin{table}[t!]
	\centering
	\begin{tabular}{lccccccc}
		\hline
		& $a$ & $b$ & $r$ & $s$ & $\nu$ & $\kappa$ & Concentration\\ 
		\hline
		Uniform & $2$ & $2$ & $0$ & $0$ & $1$ & $1$ & $g=O(1)$\\
		Hyper-g & $1$ & $2$ & $0$ & $0$ & $1$ & $1$ & $g=O(1)$\\
		Hyper-g/n & $1$ & $2$ & $1.5$ & $0$ & $1$ & $n^{-1}$ & $g=O(n)$\\
		Beta-prime & $0.5$ & $n-J_\xi - 1.5$ & $0$ & $0$ & $1$ & $1$ & $g=O(n)$\\
		ZS-adapted & $1$ & $2$ & $0$ & $n+3$ & $1$ & $1$ & $g=O(n)$\\
		Robust & $1$ & $2$ & $1.5$ & $0$ & $\frac{n+1}{J_\xi+1}$ & 1 & $g=O(n)$ \\
		Intrinsic  & $1$ & $1$ & $1$ & $0$ & $\frac{n+J_\xi+1}{J_\xi+1}$ & $\frac{n+J_\xi+1}{n}$ & $g=O(n)$ \\
		\hline
	\end{tabular}
	\caption{Distributions belonging to the tCCH family.}
	\label{table:mixg}
\end{table}

Table~\ref{table:mixg} presents several distributions from the tCCH family, including the uniform prior (on $(g+1)^{-1}$), the hyper-g and hyper-g/n priors \citep{liang2008mixtures}, the beta-prime prior \citep{maruyama2011fully}, the Zellner Siow (ZS)-adapted prior \citep{held2015approximate}, the robust prior \citep{bayarri2012criteria}, and the intrinsic prior \citep{womack2014inference}.
Note that the beta-prime prior is only proper if $J_\xi <n-1$, so this constraint needs to be incorporated into $\Pi(\xi)$ when using the beta-prime prior.
According to \citet{li2018mixtures}, prior distributions can be classified into two categories based on their concentration: $g=O(1)$ and $g=O(n)$. (This notation can be misleading, as it refers to the concentration order of the distribution rather than the actual value of $g$; \citet{maruyama2011fully} uses the same notation.) Figure~\ref{plot:gprior_log} illustrates the concentration behavior of each prior distribution on $g$.

\begin{figure}[t!]
	\centering
	\includegraphics[width = 11.5cm]{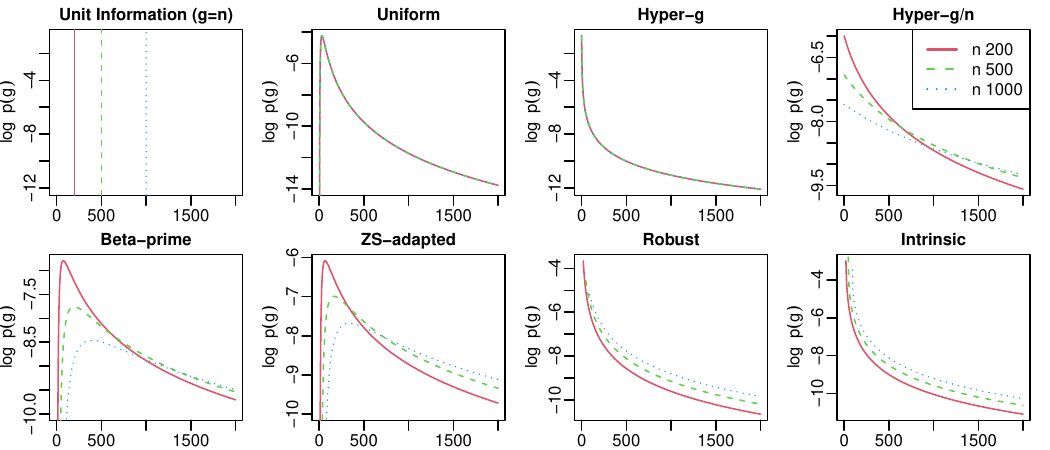}
	\caption{Distributions belonging to the tCCH family for $n=200, 500, 1000$, with $J_\xi = 10$ if required.}
	\label{plot:gprior_log}
\end{figure}

We define the confluent hypergeometric function of two variables \citep{gordy1998generalization} as
$\Phi_1(\alpha,\beta,\gamma, x, y) = B(\alpha, \gamma - \alpha)^{-1} \int_0^1 u^{\alpha-1} (1-u)^{\gamma-\alpha-1}(1-yu)^{-\beta} e^{xu}du$, for $\gamma>\alpha>0$, $\beta>0$, $x\in\mathbb R$, and $y<1$.\footnote{These parameter ranges ensure that $\Phi_1$ is finite, positive, and real; see Theorem~1 of \citet{gordy1998generalization}.} The resulting marginal likelihood is expressed as 
\begin{align}\label{eqn:mixedg}
	\begin{split}
		p(Y\mid \xi)
		&= p(Y\mid\hat{\eta}_{\xi}) \text{tr}(\mathcal{J}_n(\hat{\eta}_\xi))^{-1/2} \nu^{-J_\xi / 2} \exp \!\left(-\frac{Q_\xi}{2\nu}\right)\frac{B((a+J_\xi)/2, b/2) }{B(a/2, b/2) }\\
		&\quad\times\Phi_1\!\left(\frac{b}{2}, r, \frac{a+b+J_\xi}{2}, \frac{s+Q_\xi}{2\nu}, 1-\kappa\right)   \bigg/ \Phi_1\!\left(\frac{b}{2}, r, \frac{a+b}{2}, \frac{s}{2\nu}, 1-\kappa\right),
	\end{split}
\end{align}
where $B(\cdot,\cdot)$ denotes the beta function. The derivation of \eqref{eqn:mixedg} is detailed in Section~S3 of the supplementary material.
Generally, $\Phi_1$ cannot be evaluated analytically and requires numerical approximation. For this purpose, we utilize the Gaussian-Kronrod quadrature routine available in the Boost \texttt{C++} library.

The approximate posterior for $((g+1)^{-1},\alpha,\beta_\xi)$ conditional on $\xi$ is given by
\begin{align}
	\begin{split}
		\frac{1}{g+1} \mid Y, \xi &\sim \text{tCCH}\!\left(\frac{a+J_\xi}{2}, \frac{b}{2}, r, 
		\frac{s+Q_\xi}{2}, \nu, \kappa\right),\\
		\beta_\xi \mid Y, g,\xi &\sim 
		\text{N}\!\left(\frac{g}{g+1}\hat\beta_\xi, \frac{g}{g+1}(\tilde B_\xi^T J(\hat\eta_\xi) \tilde B_\xi)^{-1}\right),\\
		\alpha \mid Y, g, \beta_\xi,\xi &\sim 
		\text{N}\!\left(\hat \alpha_\xi - \text{tr}(\mathcal{J}_n(\hat\eta_\xi))^{-1}1_n^T \mathcal{J}_n(\hat\eta_\xi) B_\xi(\beta_\xi-\hat\beta_\xi), \text{tr}(\mathcal{J}_n(\hat{\eta}_\xi))^{-1}\right).
	\end{split}
	\label{eqn:posterior}
\end{align}
The derivation of \eqref{eqn:posterior} is detailed in Section~S3 of the supplementary material.
This expression is also applicable for the unit information prior by substituting the first line with the point mass posterior $\Pi(g \mid Y,\xi)=\delta_n(g)$.
Sampling from tCCH distributions can be performed using MCMC, but exact sampling is possible with certain prior specifications. Specifically, if the uniform prior, hyper-g prior, ZS-adapted prior, or robust prior is used, the first line of \eqref{eqn:posterior} simplifies to a truncated gamma distribution, making  exact sampling straightforward.
For the remaining priors, slice sampling with data augmentation can be employed. Details on the sampling procedures are provided in Section~S4 of the supplementary material.
The joint posterior $\Pi(\alpha,\beta_\xi,\xi,g\mid Y)$ is fully specified by the posterior in \eqref{eqn:posterior} and the marginal posterior of $\xi$, $\Pi(\xi\mid Y)$.
The latter is obtained by specifying a prior $\Pi(\xi)$ as described in Section~\ref{sec:knotprior} and using the approximate marginal likelihood $p(Y\mid \xi)$ in \eqref{eqn:mixedg} (or $p(Y\mid g,\xi)$ in \eqref{eqn:fixedg} for the unit information prior). The posterior distribution of a functional in \eqref{eqn:functional} can then be evaluated either by directly marginalizing $\xi$ or by using MCMC for Monte Carlo integration of $\xi$, depending on the prior specified for $\xi$ in Section~\ref{sec:knotprior}.

\subsection{Behavior of the Bayes factor}
\label{sec:penalty}

The choice of $g$ is crucial for achieving appropriate sparsity in model selection with the g-prior \citep{kass1995bayes}. A large value of $g$ tends to favor sparse models, while a small value of $g$ supports more complex models. This choice is particularly important in our additive model setup, as it directly influences the smoothness of the additive functions. In the literature on nonparametric regression with basis expansion, many studies use the unit information prior, which corresponds to setting $g=n$ \citep[e.g.][]{gustafson2000bayesian, dimatteo2001bayesian, kohn2001nonparametric}.
However, as noted earlier, a mixture of g-priors can offer improved empirical performance in BMS \citep{liang2008mixtures,li2018mixtures}. 
While attempts have been made to assign a prior to $g$ in nonparametric regression \citep{jeong2016bayesian, jeong2017analysis, francom2018sensitivity, francom2020bass, jeong2021bayesian}, a thorough investigation into how these approaches differ from the unit information prior is still lacking.
In this section, we explore how mixtures of g-priors compare to the unit information prior  and discuss why the unit information prior might not be the optimal choice for estimating GAMs.

Our investigation utilizes Bayes factors.
For two sets of knots $\xi_{(1)}$ and $\xi_{(2)}$, the Bayes factor of $\xi_{(1)}$ to $\xi_{(2)}$ is defined as $	BF[\xi_{(1)};\xi_{(2)}]=p(Y \mid \xi_{(1)})/p(Y \mid \xi_{(2)})$.
For exponential family models with a known $\phi$, the marginal likelihood $p(Y\mid \xi)$ is given by \eqref{eqn:fixedg} with $g=n$ for the unit information prior and by \eqref{eqn:mixedg} for mixtures of g-priors induced by tCCH priors on $(g+1)^{-1}$.
To understand how the Bayes factor penalizes model complexity, we consider two knots $\xi_{(1)}$ and $\xi_{(2)}$ such that $J_{\xi_{(1)}} = J_{\xi_{(2)}}+1$ and $\hat\eta_{\xi_{(1)}}=\hat\eta_{\xi_{(2)}}$. In other words, both knots contribute equally to the model fit, but $\xi_{(1)}$ has one additional redundant knot-point compared to $\xi_{(2)}$. 
Accordingly, the Bayes factor satisfies $ BF[\xi_{(1)};\xi_{(2)}]<1$, indicating that the larger model $\xi_{(1)}$ is never preferable over the smaller model $\xi_{(2)}$ owing to the same model fit. The Bayes factor $ BF[\xi_{(1)};\xi_{(2)}]$ quantifies the relative preference for the larger model $\xi_{(1)}$ over the smaller model $\xi_{(2)}$. For example, if $ BF[\xi_{(1)};\xi_{(2)}]=1/2$, the larger model $\xi_{(1)}$ is only half as preferred as the smaller model $\xi_{(2)}$ (or equivalently, the smaller model $\xi_{(2)}$ is twice as preferred). 
As $BF[\xi_{(1)};\xi_{(2)}]$ approaches 1, the preference for the two models becomes equal.

We examine how the Bayes factor behaves with changes in $J_{\xi_{(1)}}$ and the goodness-of-fit. In Gaussian regression, the goodness-of-fit is naturally assessed using the coefficient of determination. For the exponential family models, the pseudo-$R^2$, defined as $1-\exp(-D/n)$ with the usual deviance statistic $D$, can be used alternatively \citep{cox1989analysis, magee1990r}, with the caveat that its maximum value may be less than 1 depending on the specific model \citep{nagelkerke1991note}. To relate the Bayes factor to the pseudo-$R^2$, we use the fact that $Q_\xi$ is asymptotically equivalent to the deviance $D$ under mild conditions \citep{held2015approximate, li2018mixtures}. Therefore, we define $R_{\xi,\text{pseudo}}^2=1-\exp(-Q_\xi/n)$ to gauge the goodness-of-fit for exponential family models.

	\begin{figure}[t!]
		\centering 
		\includegraphics[width=10cm]{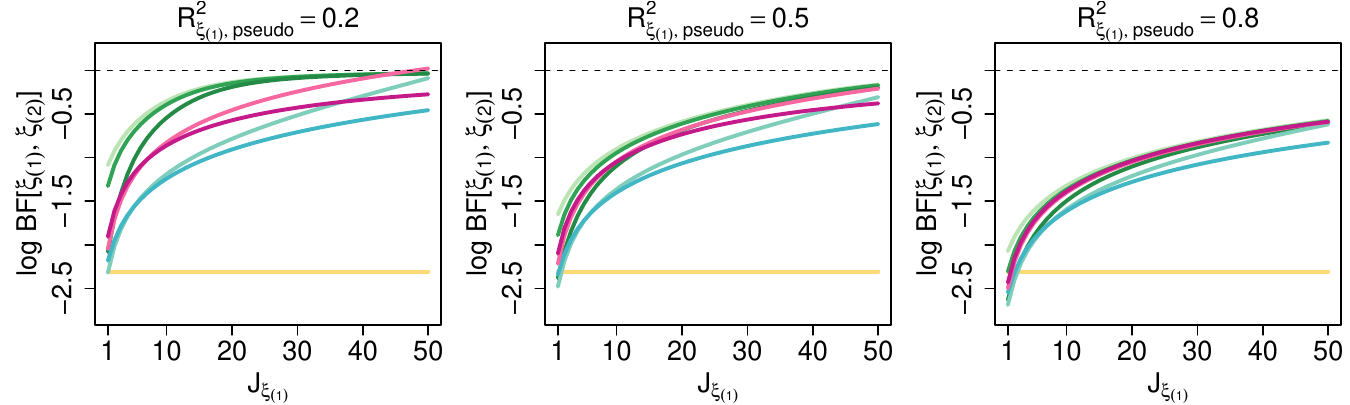} \\
		\smallskip
		\includegraphics[width=10cm]{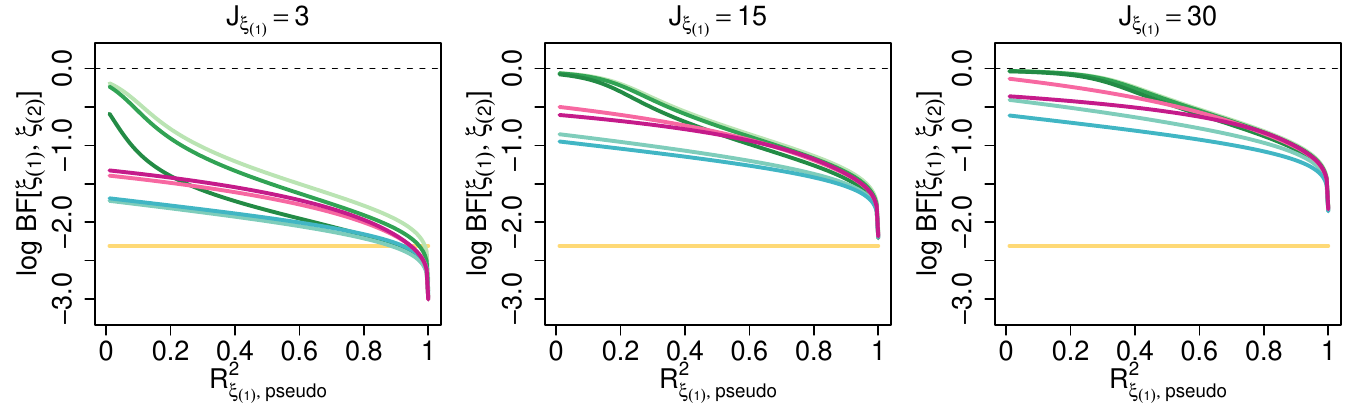}\\
		\smallskip
		\includegraphics[width=11cm]{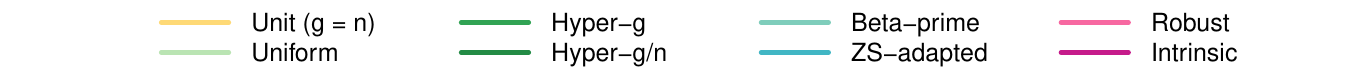}
		\caption{Change in $\log BF[\xi_{(1)};\xi_{(2)}]$ as a function of $J_{\xi_{(1)}}$ ($=J_{\xi_{(2)}}+1$) and $R_{\xi_{(1)},\text{pseudo}}^2$ ($=R_{\xi_{(2)},\text{pseudo}}^2$) for $n=1000$. The black dashed lines denote zero values, indicating equal preference for $\xi_{(1)}$ and $\xi_{(2)}$, i.e., $BF[\xi_{(1)};\xi_{(2)}]=1$.}
		\label{plot:logBFexp}
	\end{figure}

Figure~\ref{plot:logBFexp} presents a toy example with $n=1000$, illustrating how $\log BF[\xi_{(1)};\xi_{(2)}]$ changes with $J_{\xi_{(1)}}$ (where $J_{\xi_{(1)}} = J_{\xi_{(2)}} + 1$) and $R_{\xi_{(1)},\text{pseudo}}^2$ (where $R_{\xi_{(1)},\text{pseudo}}^2 = R_{\xi_{(2)},\text{pseudo}}^2$). 
Similar patterns were observed with other values of $n$.
The unit information prior consistently produces a constant Bayes factor, regardless of $J_{\xi_{(1)}}$ and $R_{\xi_{(1)},\text{pseudo}}^2$. 
 In contrast, the first row of
Figure~\ref{plot:logBFexp} shows that mixture priors cause $\log BF[\xi_{(1)};\xi_{(2)}]$ to increase as the model size $J_{\xi_{(1)}}$ increases. This indicates that when comparing two small models (i.e., models with both small $J_{\xi_{(1)}}$ and $J_{\xi_{(2)}}$), mixtures of g-priors significantly penalize the larger model $\xi_{(1)}$ unless the marginal likelihood exhibits a notable improvement. 
Conversely, when comparing two large models (i.e., models with relatively large $J_{\xi_{(1)}}$ and $J_{\xi_{(2)}}$), the larger model $\xi_{(1)}$ is less likely to be penalized, even in the absence of a substantial benefit. This property of mixture priors can enhance GAM estimation, as it needs the comparison of large models generated through basis expansion to detect both local and global signals in the target functions that might be otherwise overlooked.
The second row of Figure~\ref{plot:logBFexp} shows that mixture priors cause $\log BF[\xi_{(1)};\xi_{(2)}]$ to decrease as $R_{\xi_{(1)},\text{pseudo}}^2$ increases. This aligns with intuition: with a sufficiently high goodness-of-fit, a more complex model may not be necessary, and a simpler model is often preferable unless it provides a significant improvement in the marginal likelihood. 
The unit information prior does not account for these characteristics in GAM estimation.

Figure~\ref{plot:logBFexp} illustrates the differences between mixtures of g-priors. The beta-prime and ZS-adapted priors show similar behavior, with the smallest values of $\log BF[\xi_{(1)};\xi_{(2)}]$, indicating the weakest inclination towards the larger model $\xi_{(1)}$. In contrast, the robust and intrinsic priors exhibit comparable decay patterns and result in larger values of $\log BF[\xi_{(1)};\xi_{(2)}]$, reflecting a stronger relative preference for $\xi_{(1)}$ compared to the beta-prime and ZS-adapted priors. 
The two $O(1)$-type priors (uniform and hyper-g) demonstrate a more pronounced preference for $\xi_{(1)}$ compared to the $O(n)$-type priors. Interestingly, the hyper-g/n prior, although categorized as an $O(n)$ prior, behaves similarly to the 
$O(1)$-type priors.
Based on thee discussion in the preceding paragraph, the latter three mixture priors may be mistakenly considered suitable for GAM estimation. However, when $R_{\xi_{(1)},\text{pseudo}}^2$ is small, these priors tend to drive $\log BF[\xi_{(1)};\xi_{(2)}]$ towards zero. This suggests that the preferences for the smaller and larger models may become undesirably similar, which may lead to overfitting. In contrast, other mixture priors appear to be less affected by this issue. The question of which mixture prior performs best for GAMs remains unresolved. Our numerical studies in Section~\ref{sec:sims} suggest that robust and intrinsic priors are the most effective. The following proposition provides a basic interpretation of where the differences among mixtures of g-priors may arise.

\begin{prop}
	For the model in \eqref{eqn:glm} and \eqref{eqn:gam} with the priors in \eqref{eqn:alphaprior} and \eqref{eqn:gprior}, consider two knots $\xi_{(1)}$ and $\xi_{(2)}$ such that $J_{\xi_{(1)}} = J_{\xi_{(2)}}+k$ and $\hat\eta_{\xi_{(1)}}=\hat\eta_{\xi_{(2)}}$, where $k$ is a positive integer.
	For any positive integer $k$,
	\begin{align*}
		BF[\xi_{(1)};\xi_{(2)}] =
		\begin{cases} (1+b)^{-k/2}, & \text{if $g=b$},\\ E[(1+g)^{-k/2}\mid\xi_{(2)}, Y], & \text{if $g$ has a tCCH prior}.
		\end{cases}
	\end{align*}
	\label{prop:bf}
\end{prop}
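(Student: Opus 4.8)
The plan is to express both marginal likelihoods through the Laplace approximation in \eqref{eqn:fixedg} and to reduce the Bayes factor to a ratio in which every factor depending on the shared fitted values cancels. The guiding observation is that the three $\xi$-dependent ingredients of \eqref{eqn:fixedg}, namely the likelihood at the maximum likelihood estimator $p(Y\mid\hat\eta_\xi)$, the trace $\text{tr}(J_n(\hat\eta_\xi))$, and the Wald statistic $Q_\xi$, are all functions of the fitted additive predictor $\hat\eta_\xi$ alone. Once this is established, the hypothesis $\hat\eta_{\xi_{(1)}}=\hat\eta_{\xi_{(2)}}$ forces these quantities to agree across the two knot sets, so that only the exponents $J_{\xi_{(1)}}$ and $J_{\xi_{(2)}}$ distinguish the two marginal likelihoods.

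That $p(Y\mid\hat\eta_\xi)$ and $\text{tr}(J_n(\hat\eta_\xi))$ depend on $\xi$ only through $\hat\eta_\xi$ is immediate, since both are evaluated directly at $\hat\eta_\xi$. The substantive step, which I expect to be the main obstacle, is to show the same for $Q_\xi=\hat\beta_\xi^T\tilde B_\xi^T J_n(\hat\eta_\xi)\tilde B_\xi\hat\beta_\xi$. To this end I would substitute $B_\xi\hat\beta_\xi=\hat\eta_\xi-\hat\alpha_\xi 1_n$ into the definition of $\tilde B_\xi$ and simplify $\tilde B_\xi\hat\beta_\xi$. Writing $W=J_n(\hat\eta_\xi)$ and exploiting that $W$ is diagonal, so that $1_n^T W 1_n=\text{tr}(W)$, the intercept contribution cancels and one is left with $\tilde B_\xi\hat\beta_\xi=[I_n-\text{tr}(W)^{-1}1_n1_n^T W]\hat\eta_\xi$. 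Consequently $Q_\xi=(\tilde B_\xi\hat\beta_\xi)^T W(\tilde B_\xi\hat\beta_\xi)$ is manifestly a function of $\hat\eta_\xi$, whence $Q_{\xi_{(1)}}=Q_{\xi_{(2)}}=:Q$.

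For the point-mass case $g\sim\delta_b$, I would evaluate \eqref{eqn:fixedg} at $g=b$ for each knot set and form the ratio. By the preceding paragraph the likelihood, trace, and exponential factors cancel, leaving $(1+b)^{-(J_{\xi_{(1)}}-J_{\xi_{(2)}})/2}=(1+b)^{-k/2}$, as claimed.

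For the mixture case, I would write $p(Y\mid\xi)=\int_0^\infty p(Y\mid g,\xi)\,d\Pi(g)$, insert \eqref{eqn:fixedg}, and factor out the common $\hat\eta_\xi$-dependent terms, reducing the Bayes factor to
\begin{align*}
BF[\xi_{(1)};\xi_{(2)}]=\frac{\int_0^\infty (g+1)^{-J_{\xi_{(1)}}/2}\exp\!\left(-\frac{Q}{2(g+1)}\right)d\Pi(g)}{\int_0^\infty (g+1)^{-J_{\xi_{(2)}}/2}\exp\!\left(-\frac{Q}{2(g+1)}\right)d\Pi(g)}.
\end{align*}
Splitting $(g+1)^{-J_{\xi_{(1)}}/2}=(g+1)^{-k/2}(g+1)^{-J_{\xi_{(2)}}/2}$ in the numerator, the resulting normalized integral is exactly the expectation of $(1+g)^{-k/2}$ under the conditional law of $g$ whose density is proportional to $(g+1)^{-J_{\xi_{(2)}}/2}\exp(-Q_{\xi_{(2)}}/(2(g+1)))\,\pi(g)$, which is precisely the posterior of $g$ given $\xi_{(2)}$ and $Y$ recorded in \eqref{eqn:posterior}. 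This identifies the ratio with $E[(1+g)^{-k/2}\mid\xi_{(2)},Y]$ and completes the argument; note that the common normalizing constant of $\pi(g)$ is independent of $\xi$ and cancels throughout.
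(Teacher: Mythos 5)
Your proposal is correct, and it reaches the result by a genuinely different route than the paper's own proof. The paper plugs the closed-form approximate marginal likelihood \eqref{eqn:mixedg} into the ratio $BF[\xi_{(1)};\xi_{(2)}]$, cancels the common factors, and then recognizes the surviving ratio of Beta functions and $\Phi_1$ terms as the $(k/2)$th moment of the tCCH posterior of $(g+1)^{-1}$ via the moment formula \eqref{seqn:tcchmoment}. You instead stay at the level of the integral representation $p(Y\mid \xi)=\int p(Y\mid g,\xi)\,d\Pi(g)$ built directly on \eqref{eqn:fixedg}: after cancelling the common factors you split $(g+1)^{-J_{\xi_{(1)}}/2}=(g+1)^{-k/2}(g+1)^{-J_{\xi_{(2)}}/2}$ and identify the normalized ratio as a posterior expectation by Bayes' theorem, never invoking the $\Phi_1$ identities at all. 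What your route buys: it is more elementary and works verbatim for an arbitrary prior $\Pi(g)$ (the tCCH assumption enters only through the paper's record of the posterior in \eqref{eqn:posterior}, and even that is needed only to name the law, not to compute with it), and it makes transparent why the Bayes factor is a posterior mean of the shrinkage factor. What the paper's route buys: the intermediate display gives the explicit special-function expression for the Bayes factor, which is what is actually evaluated and plotted in Figure~\ref{plot:logBFexp}. A further point in your favor is that you explicitly verify the step both arguments rest on, namely that $Q_\xi$ depends on $\xi$ only through $\hat\eta_\xi$, via the identity $\tilde B_\xi\hat\beta_\xi=[I_n-\text{tr}(J_n(\hat\eta_\xi))^{-1}1_n1_n^T J_n(\hat\eta_\xi)]\hat\eta_\xi$, which uses $1_n^T J_n(\hat\eta_\xi)1_n=\text{tr}(J_n(\hat\eta_\xi))$ for the diagonal information matrix; the paper treats the equality $Q_{\xi_{(1)}}=Q_{\xi_{(2)}}$ as implicit in the hypothesis $\hat\eta_{\xi_{(1)}}=\hat\eta_{\xi_{(2)}}$.
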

The proof can be found in Section~S2 of the supplementary material. This proposition implies that the Bayes factor $BF[\xi_{(1)};\xi_{(2)}]$ represents the conditional posterior mean of $(1+g)^{-k/2}$, as induced by the unit information prior or tCCH priors. 
The proposition clarifies why the Bayes factor with the unit information prior remains constant.
Differences in Bayes factors with mixture priors arise from variations in the posterior means of the shrinkage factor $(1+g)^{-k/2}$.

In conjunction with a specified prior for knots, $\Pi(\xi)$, the actual model comparison for determining the basis terms relies on the posterior odds $\Pi(\xi_{(1)}\mid Y)/\Pi(\xi_{(2)}\mid Y)$, rather than solely on the Bayes factor. 
Instead of employing mixtures of g-priors, one may consider using the unit information prior and adjusting the posterior odds with a suitable prior $\Pi(\xi)$. However, this approach is generally less favorable. This is because, for the posterior odds to reflect changes in goodness-of-fit with the unit information prior, $\Pi(\xi)$ would need to be excessively data-dependent. Thus, using a mixture of g-priors along with standard priors for knots is a more natural and practical choice.

\section{Priors for knots}
\label{sec:knotprior}

A prior $\Pi(\alpha,\beta_\xi\mid\xi)$ on the coefficients was specified in Section~\ref{sec:mixgprior}. To complete the Bayesian framework, we need to specify $\Pi(\xi)$ for the knots. 
Our prior for $\beta_\xi$ requires that $B_\xi$ be of full-column rank (see \eqref{eqn:gprior} above). Therefore, we choose $\Pi(\xi)$ under the condition that $B_\xi$ has full-column rank with a prior probability of one; that is, $\Pi(\text{rank}(B_\xi)=J_\xi)=1$.
This condition is typically satisfied by ensuring $J_\xi < n$, provided the knots and design points are well distributed.

Intuitively, $\xi_j$ can be any set of singletons within the interval $(\xi_j^L,\xi_j^U)$, indicating that the intrinsic parameter space for $\xi_j$ is infinite-dimensional. However, for computational reasons, a finite truncation to a restricted support may be beneficial. 
As previously mentioned, we denote $\Xi$ as the induced support for $\Pi(\xi)$. The support $\Xi$ restricts the function class generated by the natural cubic spline basis terms. A smaller space reduces model complexity but may fail to capture both local and global features of the target function. Thus, the choice of restricted support $\Xi$ is crucial for balancing estimation quality and computational efficiency. Various strategies have been proposed for specifying $\Xi$ for $\Pi(\xi)$. In this section, we discuss widely accepted methods for constructing $\Xi$, classifying them into three categories. These approaches are described in detail in Sections~\ref{sec:prior1}--\ref{sec:prior3}. Figure~\ref{plot:plotknots} provides a graphical summary of these strategies. 
A comparison of the empirical performance of these three approaches is presented in Section~\ref{sec:sims} based on a numerical study.

\begin{figure}[t!]
	\centering
	\includegraphics[width = 10cm]{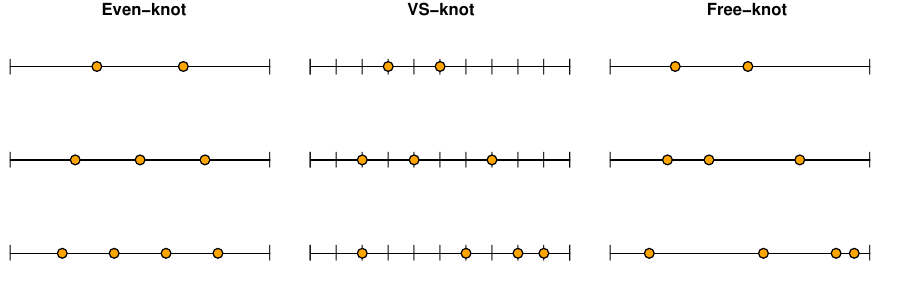}
	\caption{A graphical illustration of the three strategies for constructing $\Xi$ discussed in Sections~\ref{sec:prior1}--\ref{sec:prior3}. For even-knot splines, the locations of knots are deterministically ascertained once $|\xi_j|$ is chosen. VS-knot splines select knot-points from a pre-determined set of locations. Free-knot splines are the most flexible and have no such limitation.}
	\label{plot:plotknots}
\end{figure}

\subsection{Even-knot splines: equidistant knots}
\label{sec:prior1}

The simplest yet powerful Bayesian adaptation arises from the assumption that the number of knots is not fixed but their locations are determined by an intrinsic law.
The idea has been extensively considered in the literature and has been empirically and theoretically successful  \citep[e.g.,][]{rivoirard2012posterior,de2012adaptive,shen2015adaptive}. We refer to this approach as {\em even-knot splines}. The name should be carefully understood because evenness may be assessed by an empirical measure rather than a geometric distance.

	In this approach, a prior is assigned a number $|\xi_j|$ of knots $\xi_j$ for $j=1,\dots, p$.
The specific locations of these knots are then determined based on a predefined rule. For example, for a given number $|\xi_j|$, the knots $\xi_j$ may be equally spaced or chosen based on the quantiles of the design points $x_{ij}$, $i=1,\dots,n$. We prefer the latter approach for its stability. 
Using the quantiles also ensures that $B_\xi$ has full-column rank, provided that $J_\xi< n$ and the design points $x_{ij}$, $i=1,\dots,n$, are distinct. 
To avoid issues with duplicated quantile values for discretized design points, we recommend using only the unique quantile values.
For computational efficiency, limiting each $|\xi_j|$ such that $|\xi_j|\le M_j$ for a predetermined $M_j$, is computationally useful. The induced support is defined as 
\begin{align*}
	\Xi_{EK}=\Big\{\xi:  \text{rank}(B_\xi)=J_\xi, \, |\xi_j|\le M_j, \, \xi_{jk} = Q_{jk}, \, j=1,\dots,p,\, k=1,\dots,|\xi_j| \Big\},
\end{align*}
where $Q_{jk}$, $k=1,\dots,|\xi_j|$, are the unique quantile values of $x_{ij}$, $i=1,\dots,n$.\footnote{These unique quantile values are obtained by removing duplicates from the usual quantiles of $x_{ij}$, $i=1,\dots,n$, with equal probability. When ties are absent, they correspond to the usual quantiles.} Examples of knots in $\Xi_{EK}$ are shown in Figure~\ref{plot:plotknots}. 
 With a density $q_j:\mathbb \{0,1,\dots,M_j\}\to (0,\infty)$ on $|\xi_j|$, the prior can be formally expressed as
\begin{align}
	\pi_{EK}(\xi)\propto \prod_{j=1}^p q_j(|\xi_j|), \quad \xi \in \Xi_{EK}.
	\label{eqn:prior1}
\end{align}
Further discussion of the density $q_j$ is presented in Section~\ref{sec:q-density}.

The key advantage of the prior in \eqref{eqn:prior1} is its low model complexity. Specifically, for a moderately large $p$, all possible models can be enumerated because $|\Xi_{EK}|\le \prod_{j=1}^p (1+M_j)$. This allows for MCMC-free posterior computation in relatively low-dimensional problems. If $p$ is too large to enumerate all possibilities, the Metropolis-Hastings algorithm can be employed to explore the model space by proposing changes in $|\xi_j|$. In such cases, the computation can be streamlined by storing the value of the marginal likelihood $p(Y\mid \xi)$ for the current $\xi$ and reusing it when the same $\xi$ is revisited. We observe that this storage approach is effective unless $p$ is extremely large.

Despite its advantages, the even-knot spline approach has a significant drawback due to its deterministic rules. Specifically, it cannot accommodate functions with spatially adaptive smoothness, such as Doppler functions. This limitation highlights the need for a more flexible construction, which is addressed in the following two subsections.

\subsection{VS-knot splines: knot selection}
\label{sec:prior2}

The limitations of even-knot splines described in Section~\ref{sec:prior1} can be mitigated by using a prior that induces a richer $\Xi$ allowing for spatial adaptation. This can be achieved by allowing knot placement as well as the number of knots to be data-driven. A common approach is to set a large set of candidate basis functions and select the most important ones using Bayesian variable selection. This idea was introduced by \citet{smith1996nonparametric} and has been widely adopted in the literature on nonparametric regression \citep[e.g.,][]{kohn2001nonparametric,chan2006locally,jeong2016bayesian,jeong2017analysis,park2018analysis,jeong2021bayesian}. We refer to this approach as {\em VS-knot splines}.

Consider a set $\xi_j^c=\{\xi_{j1}^c,\dots, \xi_{jM_j}^c\}$ of knot candidates such that $\xi_j^L<\xi_{j1}^c<\dots<\xi_{jM_j}^c<\xi_j^U$ with large enough $M_j<n$. Similar to Section~\ref{sec:prior1}, the candidates $\xi_j^c$ can be equidistant or determined using the unique values of the quantiles of $x_{ij}$, $i=1,\dots,n$. We prefer the latter setup for its stability. The actual knots $\xi_j$ are selected as a subset of $\xi_j^c$ (including an empty set) using BMS. Consequently, the support consists of all possible subsets of $\{\xi_1^c,\dots,\xi_p^c\}$ with the restriction $\text{rank}(B_\xi)=J_\xi$, that is
\begin{align*}
	\Xi_{VS}=\Big\{\xi:  \text{rank}(B_\xi)=J_\xi, \, \xi_j\subset \xi_j^c, \, j=1,\dots,p \Big\}.
\end{align*}
As in Section~\ref{sec:prior1}, we assign the density $q_j:\mathbb \{0,1,\dots,M_j\}\to (0,\infty)$ to $|\xi_j|$. We then assign equal weights to all knot locations conditional on $|\xi_j|$.
The resulting prior is
\begin{align}
	\pi_{VS}(\xi)\propto\prod_{j=1}^p q_j(|\xi_j|)\binom{M_j}{|\xi_j|}^{-1},\quad \xi \in \Xi_{VS}.
	\label{eqn:prior2}
\end{align}

The VS-knot spline approach has proven effective in adapting to spatially inhomogeneous smoothness \citep[e.g.,][]{chan2006locally,jeong2016bayesian,jeong2017analysis}.
The cardinality $|\Xi_{VS}|\le 2^{\sum_{j=1}^p M_j}$ indicates that enumerating all possible models is usually impractical, highlighting the usefulness of MCMC methods for exploring model spaces. Standard Gibbs sampling and  Metropolis-Hastings algorithms are well-suited for this setup \citep{dellaportas2002bayesian}. 
Sampling efficiency can be enhanced using block updates \citep{kohn2001nonparametric,jeong2021bayesian} or adaptive sampling \citep{nott2005adaptive,ji2013adaptive}. Additionally, since $\Xi_{VS}$ is finite-dimensional, storing the marginal likelihood, as discussed in Section~\ref{sec:prior1}, appears feasible. However, our experience shows that this approach is only effective when $p$ is very small, due to memory constraints (e.g., $p\le 2$). Consequently, we do not pursue this direction.

We emphasize that the basis system in \eqref{eqn:basis} is particularly useful for the VS-knot spline approach. According to Proposition~\ref{prop:selection},
knot selection naturally translates into basis selection.
This property simplifies computation using the basis system in \eqref{eqn:ncs} and \eqref{eqn:basis}: one can generate a full basis matrix $B_j^c\in\mathbb R^{n\times (M_j+1)}$ whose $(i,k)$th component is $b_{jk}(x_{ij})$ constructed with the knot candidates $\xi_j^c=(\xi_{j1}^c,\dots, \xi_{jM_j}^c)$, and then choose important columns of $B_j^c$, while always including the first column for the linear term.
As previously noted, this approach cannot be applied to other natural cubic spline basis functions, such as natural cubic B-splines or those in (5.4) and (5.5) of \citet{hastie2009elements}. For these basis functions, the basis term $b_{j,k+1}^\ast$ may be specified with more than one knot-point for some $k$. Consequently, inserting or deleting a knot-point may alter multiple basis terms, leading to a conflict between knot selection and basis selection.

\subsection{Free-knot splines}
\label{sec:prior3}

The VS-knot spline strategy discussed in Section~\ref{sec:prior2} selects important knot locations from a set of predetermined candidates. As a result, the knots are not equally spaced, which allows for spatially varying degrees of smoothness. Despite this flexibility, further relaxation of the restriction imposed by the discrete set of knot candidates remains a topic of interest. This can be achieved with a fully nonparametric approach by allowing knots to be any singleton set within the specified range, provided that the induced $B_\xi$ is of full-column rank. 
This approach is known as {\em free-knot splines} \citep{denison1998automatic,dimatteo2001bayesian}.

As in Section~\ref{sec:prior1}, capping each $|\xi_j|$ so that $|\xi_j|\le M_j$ for a predetermined $M_j$ can be computationally beneficial.
The resulting support for $\xi$ is
\begin{align*}
	\Xi_{FK}=\Big\{\xi: \text{rank}(B_\xi)=J_\xi,  \, |\xi_j|\le M_j , \, \xi_j^L<\xi_{j1}<\dots<\xi_{j|\xi_j|}<\xi_j^U, \, j=1,\dots,p \Big\}.
\end{align*}
Clearly, the set $\Xi_{FK}$ is uncountable.
The prior is specified similarly to the one in \eqref{eqn:prior2}. However, because the mapping $|\xi_j|\mapsto\xi_j$ is a surjection rather than a bijection, the conditional prior density of $\xi_j$ given $|\xi_j|$, denoted by $\tilde q_j(\cdot\mid |\xi_j|)$, must be defined on the corresponding support. Following \citet{dimatteo2001bayesian}, $\tilde q_j$ is chosen based on a uniform prior on the $|\xi_j|$-simplex by scaling $(\xi_j^L,\xi_j^U)$ to $(0,1)$.
With density $q_j:\mathbb \{0,1,\dots,M_j\}\to (0,\infty)$, the prior on $\xi_j$ is formally expressed as:
\begin{align}
	\pi_{FK}(\xi)\propto \prod_{j=1}^p q_j(|\xi_j|) \tilde q_{j}(\xi_j\mid |\xi_j|), \quad \xi \in \Xi_{FK}.
	\label{eqn:fkprior}
\end{align}
While the original approach in \citet{dimatteo2001bayesian} requires that at least one knot be always included, our free-knot spline prior in \eqref{eqn:fkprior} extends it by allowing the possibility of an empty knot, which can account for a completely linear effect. 
To explore the posterior distribution, reversible jump MCMC with birth, death, and relocation proposals can be used \citep{dimatteo2001bayesian}. This approach is generally more computationally demanding than methods for VS-knot splines. 
Despite the increased flexibility of the free-knot spline prior compared to the one in \eqref{eqn:prior2}, our experience indicates that this flexibility does not significantly improve performance in most practical cases. 
The inherent inefficiency of reversible-jump MCMC further underscores the importance of avoiding unnecessary use of free-knot splines.
Our simulation study in Section~\ref{sec:sims} shows that while performance measures for free-knot splines are comparable to those for VS-knot splines, the sampling efficiency (measured as the ratio of effective sample size to runtime) is notably lower for free-knot splines.

Similar to the VS-knot spline approach, the basis construction in \eqref{eqn:ncs} and \eqref{eqn:basis} is useful for free-knot splines. According to Proposition~\ref{prop:selection}, adding or removing a knot-point corresponds to adding or removing the corresponding basis term. Consequently, reversible-jump MCMC can be implemented by modifying the matrix columns without needing to reconstruct the entire basis term.

\subsection{Prior distribution on $|\xi_j|$}
\label{sec:q-density}
The priors described in Sections~\ref{sec:prior1}--\ref{sec:prior3} require specifying the  density $q_j$ for $|\xi_j|$.
Previous studies have shown that to achieve optimal properties in nonparametric regression, priors used in BMS-based methods must have appropriately decaying tail properties \citep[e.g.,][]{shen2015adaptive}. Priors with guaranteed tail properties include Poisson and geometric distributions, with suitable truncation as needed. 
To leverage both the theoretical benefits and practical performance, we select our default prior as a mixture of a point mass at $|\xi_j|=0$ and a truncated geometric distribution for $|\xi_j| > 0$.
Specifically, the density is given by
\begin{align}
	q_j(u) = 
	\begin{cases}
		\lambda_j, &\quad u = 0,\\
		(1-\lambda_j)(1-\varpi_j)^u /\sum_{\ell=1}^{M_j} (1-\varpi_j)^\ell ,&\quad u=1,\dots,M_j,
	\end{cases}
	\label{eqn:truncgeom}
\end{align}
where the hyperparameter $\lambda_j\in [0,1]$ represents the prior belief regarding a linear effect, while $\varpi_j\in [0,1]$ governs the tail behavior. A reasonable default choice for $\lambda_j$ is $1/2$.
Selecting a value close to zero for $\varpi_j$ makes the second part of the prior in \eqref{eqn:truncgeom} closely resemble a discrete uniform distribution while still ensuring the desired tail property for optimality. 
However, we find that using a moderately small value for $\varpi_j$ improves the stability in estimating knot specifications. 
Consequently, we set $\varpi_j=0.2$ as the default value.

The density $q_j$ in \eqref{eqn:truncgeom} is also useful in GAPLMs, where some predictor variables are expected to have linear effects (e.g., binary variables). This is achieved by fixing specific $f_j$ to include only the linear basis term $N_1$. Accordingly, for predictor variables with linear effects, we set $\lambda_j=1$ for the linear additive components and $\lambda_j=1/2$ for the nonparametric additive components.

\section{Numerical study}
\label{sec:sims}
The primary goal of this study is to investigate the behavior of mixtures of g-priors in BMS-based approaches for estimating GAMs. While Section~\ref{sec:penalty} provides some foundational insights, the optimal mixture prior for GAMs remains unclear. Additionally, we evaluate three strategies for specifying priors for knots, as discussed in Section~\ref{sec:knotprior}, and compare them with other function estimation methods. This section introduces the simulation study designed to address these objectives.

\subsection{Comparison among the mixtures of g-priors}\label{sec:sim1}

We first conduct a simulation study to examine the differences in performance between mixtures of g-priors for estimating GAMs. For the synthetic functions, we consider the following four uncentered functions $f_j^\ast:[-1,1]\to \mathbb R$, $j=1,2,3,4$:
\begin{align}
	\begin{split}
	{f}_1^\ast(x) &= 0.5(2 x^5 + 3 x^2 + \cos (3\pi x) - 1), \\
	{f}_2^\ast(x) &= \frac{21(3x+1.5)^3}{8000}   + \frac{21(3x-2.5)^2 }{400e^{-3x - 1.5}} \sin\! \left(\frac{17\pi(3x+1.5)^2 }{32}\right) \mathbbm 1_{(-0.5,0.85)}(x),\\
	{f}_3^\ast(x) &= x,\\
	{f}_4^\ast(x) &= 0,
	\label{eqn:functions}
	\end{split}
\end{align}
where $\mathbbm 1_A$ is the indicator function of a set $A$.
Specifically, ${f}_1^\ast$ is a nonlinear function that is not a polynomial, ${f}_2^\ast$ is a nonlinear function with locally varying smoothness, ${f}_3^\ast$ is a linear function, and $f_4^\ast$ is a constant function.
The two nonlinear functions ${f}_1^\ast$ and ${f}_2^\ast$ are adapted from \citet{gressani2021laplace} and \citet{francom2020bass}, respectively. 
These functions are illustrated in Figure~\ref{plot:ber1} with appropriate centering.

\begin{figure}[t!]
	\centering
\subfloat[\centering Pointwise posterior mean estimates of $f_1$ in $100$ replications]{\includegraphics[width=1\textwidth]{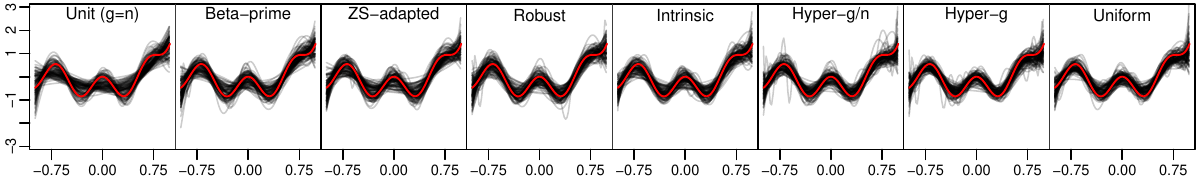}}\\
\subfloat[\centering Pointwise posterior mean estimates of $f_2$ in $100$ replications]{\includegraphics[width=1\textwidth]{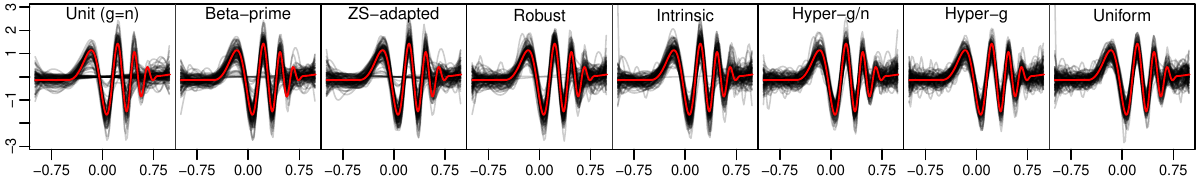}}\\
\subfloat[\centering Pointwise posterior mean estimates of $f_3$ in $100$ replications]{\includegraphics[width=1\textwidth]{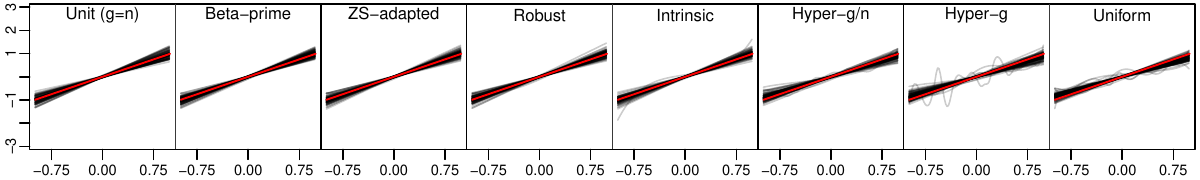}}\\
\subfloat[\centering Pointwise posterior mean estimates of $f_4$ in $100$ replications]{\includegraphics[width=1\textwidth]{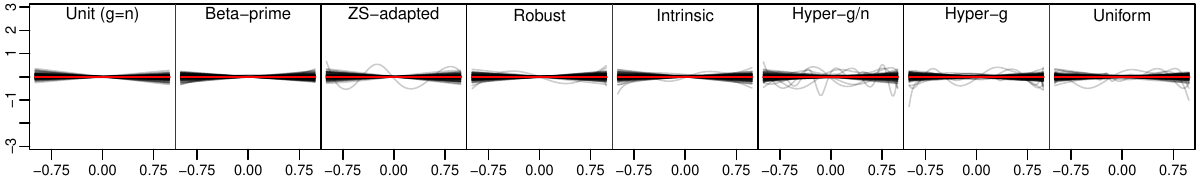}}
\caption{Pointwise posterior means (gray) of $f_1$, $f_2$, $f_3$ and $f_4$ in the nonparametric logistic regression model with $n=1000$, obtained from randomly chosen 100 replicated datasets, along with the true function (red).}
	\label{plot:ber1}
\end{figure}

The simulation datasets are generated using the exponential family model with the additive predictor $\eta_i=\sum_{j=1}^4 f_j^\ast(x_{ij})=\alpha+\sum_{j=1}^4 f_j(x_{ij})$, where $x_{ij}$ are drawn independently from $\text{Unif}(-1,1)$, $f_j$ represents the centered version of $f_j^\ast$, and $\alpha$ is the induced intercept. 
In this section, we present the simulation results for the nonlinear logistic regression model, where $Y_i\sim\text{Bernoulli}(e^{\eta_i}/(1+e^{\eta_i}))$. Section~S6 of the supplementary material includes a simulation study for Poisson regression $Y_i\sim \text{Poi}(e^{\eta_i})$ and Gaussian regression $Y_i\sim N(\eta_i,\sigma^2)$.

As noted in Section~\ref{sec:sim_others}, the VS-knot spline approach performs reasonably well compared to the other strategies for choosing $\Xi$ described in Section~\ref{sec:knotprior}. Therefore, we focus specifically on VS-knot splines in this section.
For each value of $n=500,1000,2000$, we generate 500 data replications and estimate $f_j$ using VS-knot splines with $M_j=30$ knot candidates. We employ the unit information prior and the mixture priors summarized in Table~\ref{table:mixg}, with the prior in \eqref{eqn:truncgeom} with $\varpi=0.2$ and $\lambda = 1/2$.
For each prior distribution, we run a Markov chain of length 10,000 to explore the posterior distribution, ensuring convergence after an appropriate burn-in period.
We then calculate the root mean squared error (RMSE) and the coverage probabilities of 95\% pointwise credible bands.

\begin{figure}[t!]
\centering
\subfloat[\centering Logarithm of RMSE for $f_1$]{\includegraphics[width = 0.47\textwidth]{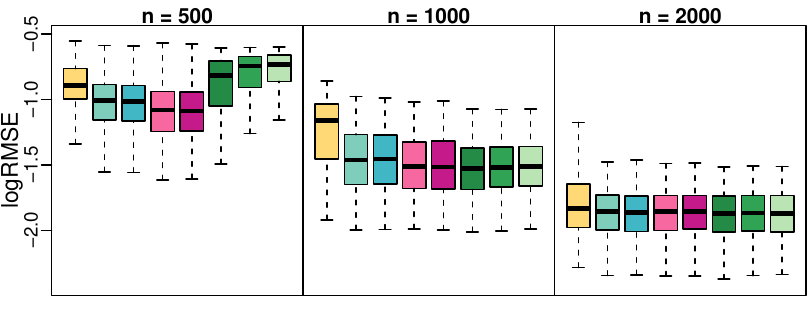}}
~
\subfloat[\centering Coverage probabilities for $f_1$]{\includegraphics[width = 0.47\textwidth]{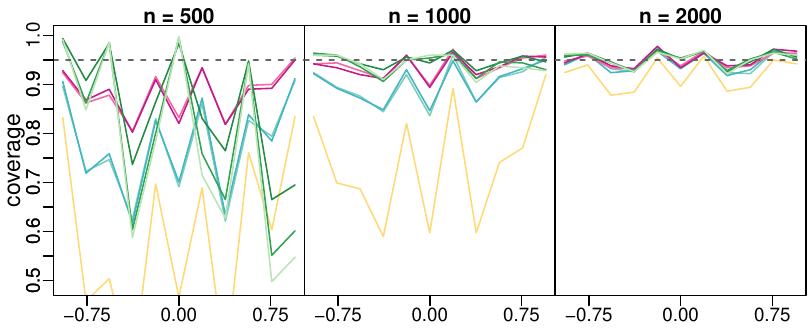}}
\\
\subfloat[\centering Logarithm of RMSE for $f_2$]{\includegraphics[width = 0.47\textwidth]{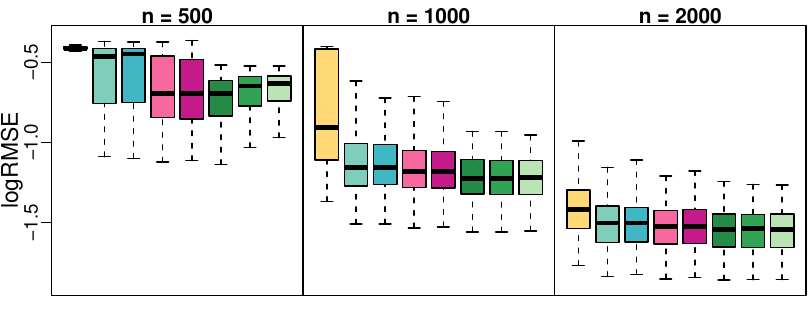}}
~
\subfloat[\centering Coverage probabilities for $f_2$]{\includegraphics[width = 0.47\textwidth]{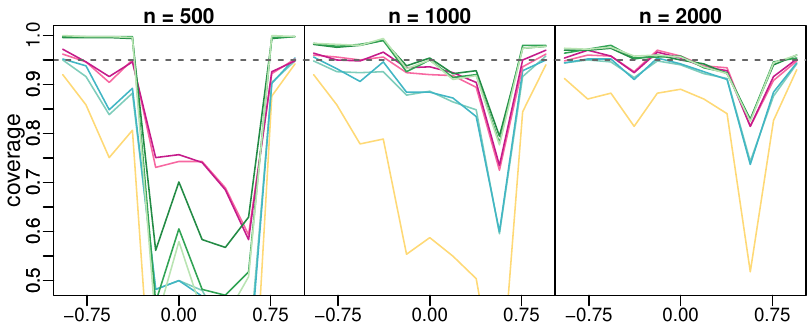}}
\\
\subfloat[\centering Logarithm of RMSE for $f_3$]{\includegraphics[width = 0.47\textwidth]{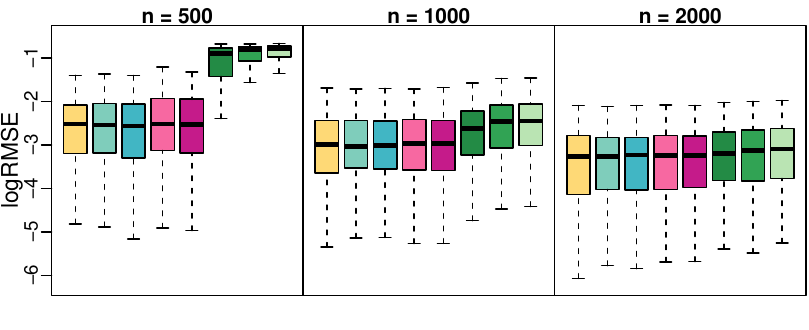}}
~
\subfloat[\centering Coverage probabilities for $f_3$]{\includegraphics[width = 0.47\textwidth]{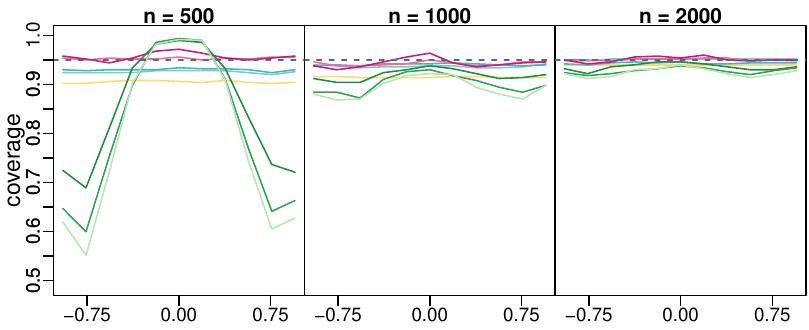}}
\\
\subfloat[\centering Logarithm of RMSE for $f_4$]{\includegraphics[width = 0.47\textwidth]{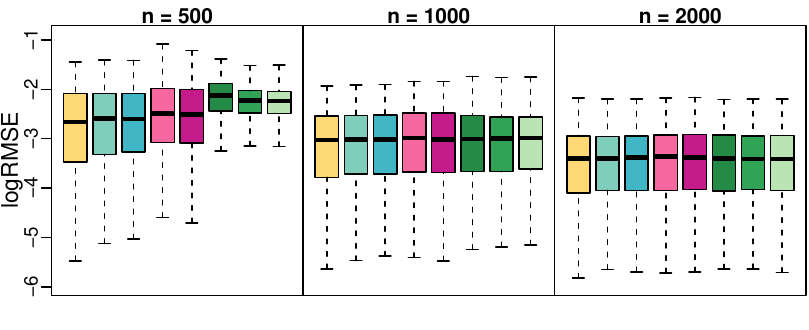}}
~
\subfloat[\centering Coverage probabilities for $f_4$]{\includegraphics[width = 6.5cm]{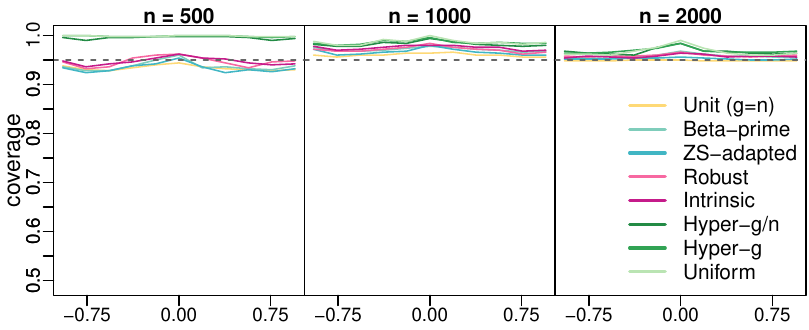}}
\caption{Logarithm of RMSE and coverage probabilities for $f_1$, $f_2$, $f_3$ and $f_4$ in the nonparametric logistic regression models with $n=500,1000, 2000$, obtained from 500 replicated datasets. Outliers are excluded to improve visualization.}
\label{plot:ber2}
\end{figure}

Figures~\ref{plot:ber1} and \ref{plot:ber2} display the simulation results. 
As discussed in Section~\ref{sec:penalty}, the unit information prior behaves quite differently from the mixture priors. It generally underperforms in nonlinear function estimation and exhibits clear signs of underfitting. This indicates that using the unit information prior for function estimation may be unsuitable.
The main challenge is determining the most appropriate mixture prior for function estimation. 
Although differences between mixtures of g-priors become less pronounced with larger sample sizes, intrinsic and robust priors consistently outperform other priors in finite samples. The beta-prime and ZS-adapted priors tend to exhibit slight underfitting, while the uniform, hyper-g, and hyper-g/n priors tend to exhibit overfitting. This aligns with the expectations discussed in Section~\ref{sec:penalty}.
Across all functions, the robust and intrinsic priors achieve moderate RMSE and coverage properties, with the intrinsic prior being slightly more accurate for smaller samples.
The simulation results for Poisson and Gaussian regressions in the supplementary material support this conclusion. We recommend using the intrinsic or robust prior as the default choice for $g$.

\subsection{Comparison with other methods}\label{sec:sim_others}

We now compare BMS-based methods for GAMs with other approaches to GAM estimation. We consider the three strategies described in Section~\ref{sec:knotprior}: even-knot splines, VS-knot splines, and free-knot splines, alongside several competitors available in R packages: \texttt{R2BayesX} \citep{umlauf2012structured}, \texttt{Blapsr} \citep{gressani2021laplace}, \texttt{mgcv} \citep{wood2017generalized}, and \texttt{bsamGP} \citep{jo2019bsamgp}.
Based on the results in Section~\ref{sec:sim1}, the three BMS-based approaches use the intrinsic prior. 
Among the competitors, \texttt{mgcv} is the only frequentist method, while the others are Bayesian. Specifically,  \texttt{R2BayesX} and \texttt{Blapsr} are based on Bayesian P-splines \citep{lang2004bayesian}.
\texttt{R2BayesX} offers conventional MCMC estimates, whereas \texttt{Blapsr} provides an option for the Laplace approximation, which can improve computational efficiency when the number of additive components is small. In contrast, \texttt{bsamGP} uses a second-order Gaussian process to estimate nonparametric functions in GAMs.

To ensure a fair comparison, we carefully select simulation specifications. For the BMS-based methods (i.e., even-knot, VS-knot, and free-knot splines), the maximum number of knots $M_j$ is consistently set to $30$ for each $j=1,2,3,4$.  Similarly, for the competitors relying on penalized splines (i.e., \texttt{R2BayesX}, \texttt{Blapsr}, and \texttt{mgcv}), we use $M_j=30$, ensuring comparable least-penalized models across both BMS-based and penalized spline approaches. The \texttt{mgcv} package offers an option for locally adaptive smooth functions. We explore both the standard version with a single smoothness parameter (\texttt{mgcv-ps}) and a variant with local adaptation (\texttt{mgcv-ad}). For \texttt{bsamGP}, the number of cosine basis functions in the spectral representation of the Gaussian process priors is set equal to $M_j$ for each $f_j$.
The simulation settings follow those in Section~\ref{sec:sim1}, using the functions specified in \eqref{eqn:functions}. This section presents the simulation results for nonparametric logistic regression, with results for Poisson and Gaussian regression available in Section S6 of the supplementary material.
For each Bayesian method relying on MCMC, we generate a Markov chain of length 10,000 to explore the posterior distribution, ensuring convergence after a suitable burn-in period.
We then calculate the RMSE and 95\% pointwise credible bands for selected points for each method.

\begin{figure}[t!]
	\centering
	\subfloat[\centering Pointwise posterior mean estimates of $f_1$ in $100$ replications]{\includegraphics[width=1\textwidth]{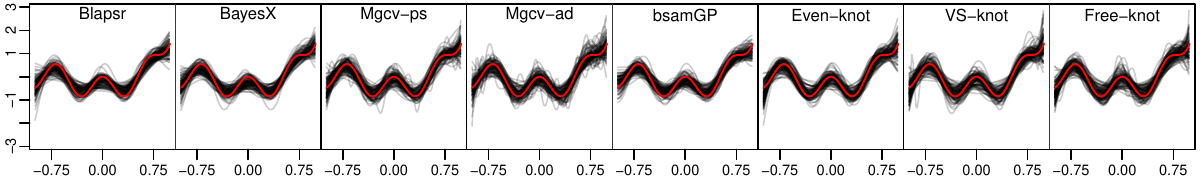}}\\
	\subfloat[\centering Pointwise posterior mean estimates of $f_2$ in $100$ replications]{\includegraphics[width=1\textwidth]{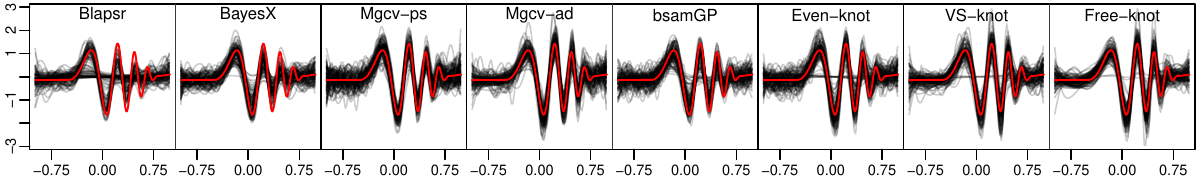}}\\
	\subfloat[\centering Pointwise posterior mean estimates of $f_3$ in $100$ replications]{\includegraphics[width=1\textwidth]{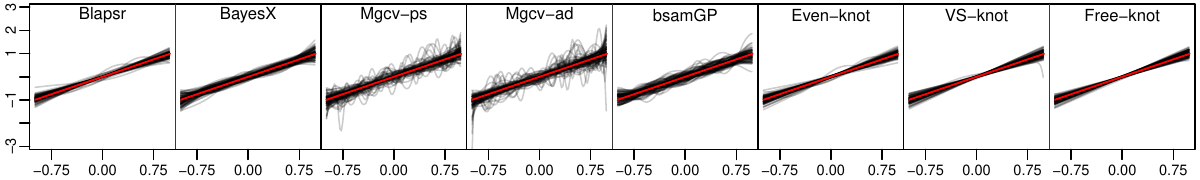}}\\
	\subfloat[\centering Pointwise posterior mean estimates of $f_4$ in $100$ replications]{\includegraphics[width=1\textwidth]{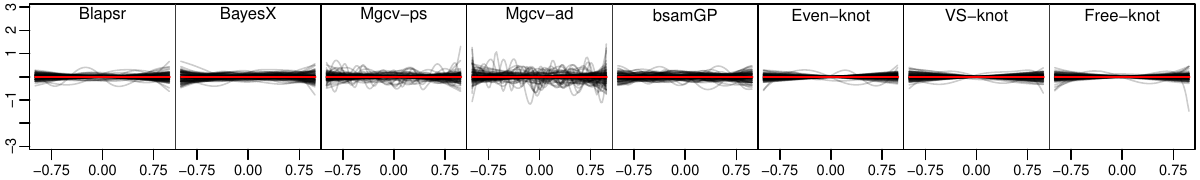}}
	\caption{Pointwise posterior means (gray) of $f_1$, $f_2$, $f_3$ and $f_4$ in the nonparametric logistic regression model with $n=1000$, obtained from randomly chosen 100 replicated datasets, along with the true function (red).}
	\label{plot:ber3}
\end{figure}

\begin{figure}[t!]
	\centering
	\subfloat[\centering Logarithm of RMSE for $f_1$]{\includegraphics[width = 0.47\textwidth]{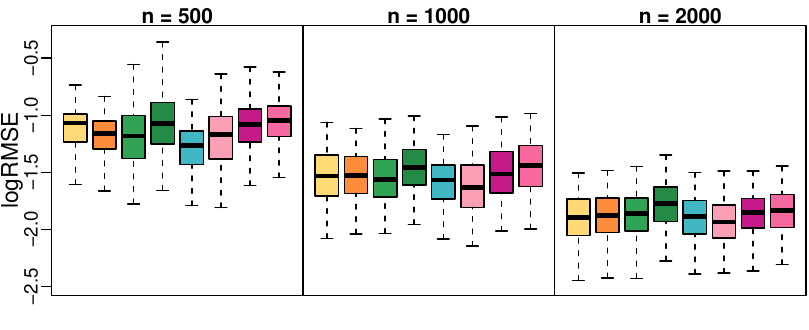}}
~
	\subfloat[\centering Coverage probabilities for $f_1$]{\includegraphics[width = 0.47\textwidth]{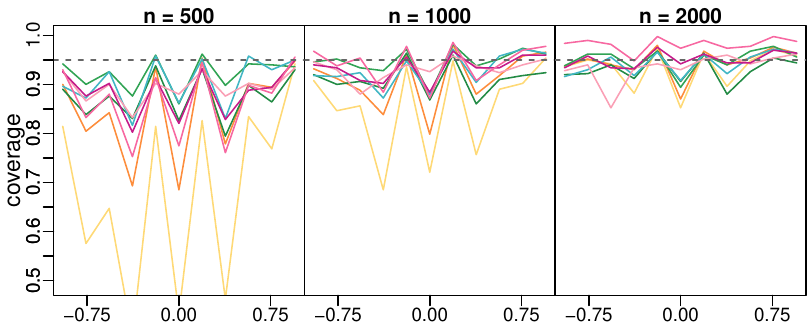}}
	\\
	\subfloat[\centering Logarithm of RMSE for $f_2$]{\includegraphics[width = 0.47\textwidth]{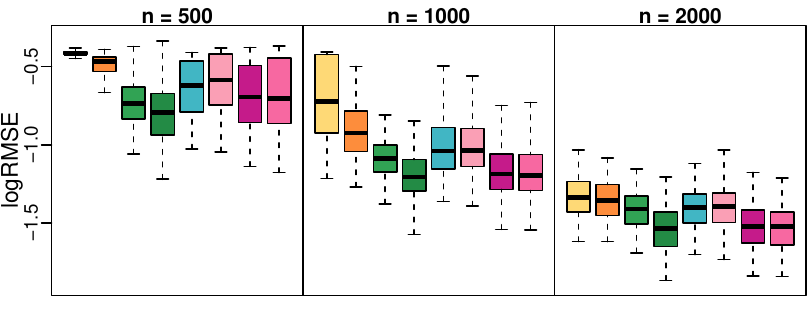}}
~
	\subfloat[\centering Coverage probabilities for $f_2$]{\includegraphics[width = 0.47\textwidth]{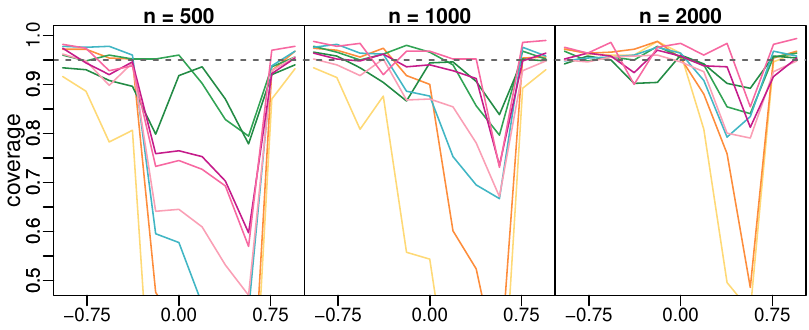}}
	\\
	\subfloat[\centering Logarithm of RMSE for $f_3$]{\includegraphics[width = 0.47\textwidth]{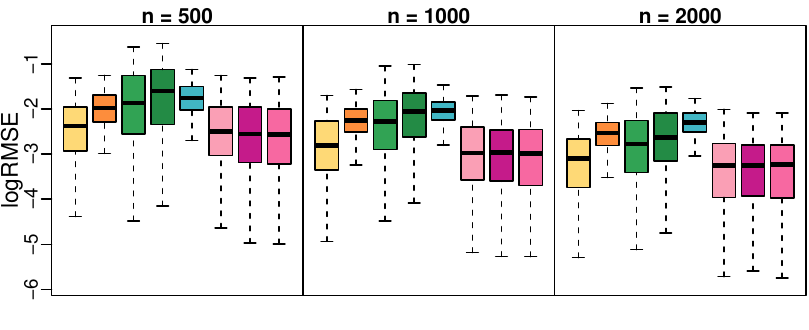}}
~
	\subfloat[\centering Coverage probabilities for $f_3$]{\includegraphics[width = 0.47\textwidth]{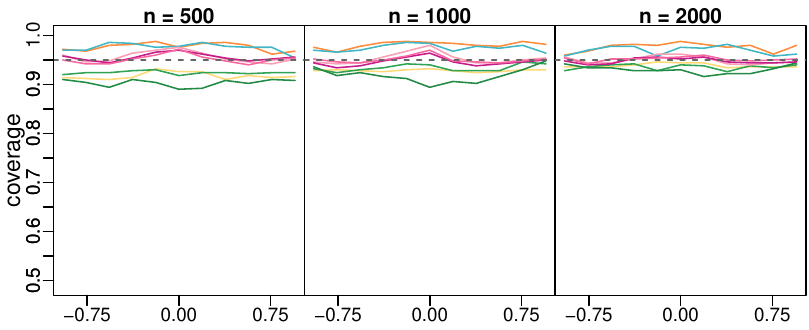}}
	\\
	\subfloat[\centering Logarithm of RMSE for $f_4$]{\includegraphics[width = 0.47\textwidth]{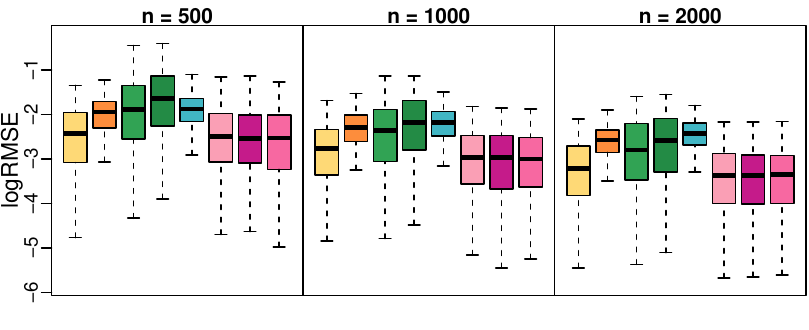}}
~
	\subfloat[\centering Coverage probabilities for $f_4$]{\includegraphics[width = 0.47\textwidth]{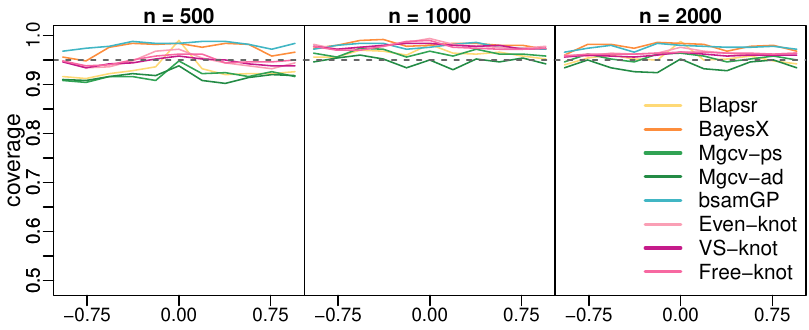}}
	\caption{Logarithm of RMSE and coverage probabilities for $f_1$, $f_2$, $f_3$ and $f_4$ in the nonparametric logistic regression models with $n=500,1000, 2000$, obtained from 500 replicated datasets. Outliers are excluded to improve visualization.}
	\label{plot:ber4}
\end{figure}

Figures~\ref{plot:ber3} and \ref{plot:ber4} summarize the simulation results for the nonlinear logistic regression models. Observations reveal that \texttt{R2BayesX} and \texttt{Blapsr} tend to oversmooth the target functions owing to excessive penalization. In contrast, \texttt{mgcv} produces highly oscillatory estimates for the linear and constant functions, reflecting a tendency to overfit simpler functions.
Both \texttt{R2BayesX} and \texttt{Blapsr} struggle with locally varying smoothness, as penalized splines are not inherently designed for such adaptability without significant modifications \citep{crainiceanu2007spatially, jullion2007robust, scheipl2009locally}.
While \texttt{mgcv} with local adaptation performs well in estimating the locally varying smoothness of $f_2$, the performance for $f_1$, $f_3$, and $f_4$ suggests that adaptive estimation using \texttt{mgcv} may lead to higher RMSEs and incorrect coverage probabilities. 
A major drawback of \texttt{mgcv} is the challenge of accurately specifying whether adaptive estimation will achieve optimal performance, given the unknown characteristics of the target function.

Among the BMS-based approaches, even-knot splines exhibit limitations in adapting to the locally varying smoothness of $f_2$ owing to their construction with equidistant knots. In contrast, both the VS-knot and free-knot splines effectively identify the local features of $f_2$. The results show that the RMSEs for these two adaptive estimation methods are comparable, although free-knot splines tend to slightly overestimate the coverage probabilities.
Similar to the comparison between \texttt{mgcv-ps} and \texttt{mgcv-ad}, even-knot splines perform better than the VS-knot and free-knot splines in estimating $f_1$. However, the BMS-based methods are comparable in estimating $f_3$ and $f_4$, indicating that BMS-based methods are generally less sensitive to whether adaptive estimation is employed. Given that VS-knot splines typically outperform other methods and effectively handle local adaptation, we recommend using them as the default option. Nonetheless, even-knot splines are faster than other BMS-based methods and eliminate the need for MCMC when $p$ is relatively small.

We also evaluated the computational efficiency of the Bayesian methods based on MCMC, excluding \texttt{mgcv} (which follows a frequentist approach) and \texttt{Blapsr} (which does not use MCMC). We measured the effective sample size of the posterior per second of runtime, after accounting for appropriate burn-in periods. 
Figure~\ref{plot:berESS} displays the efficiency measures across 500 replicates. Contrary to the common belief that BMS-based methods might be inefficient, they perform comparably to other Bayesian methods. In particular, even-knot splines are the most efficient, owing to their fast mixing despite their slower overall processing. Although VS-knot splines are slower than even-knot splines, the trade-off is justified by their capability for local adaptation.

\begin{figure}[t!]
	\centering
	\includegraphics[width = 9.5cm]{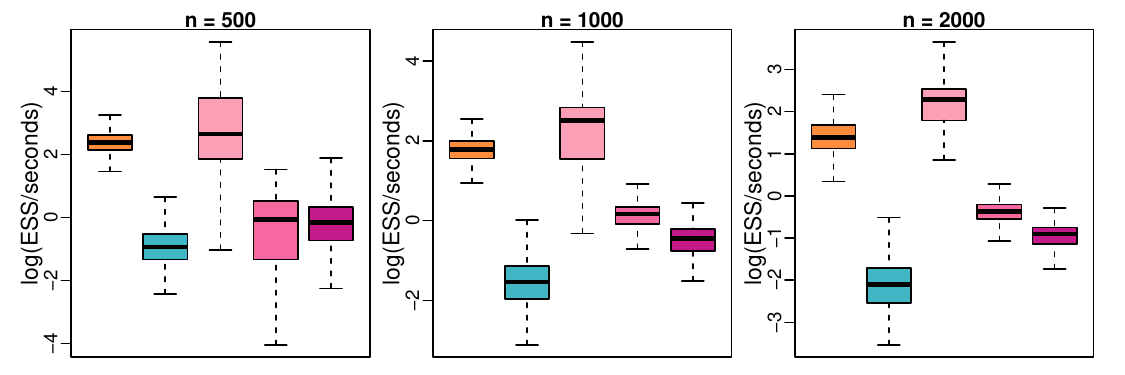}
	\includegraphics[width=9.5cm]{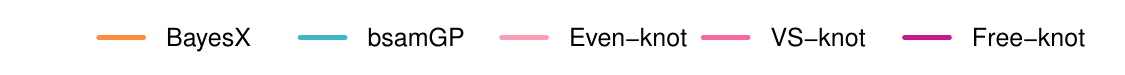}
	\caption{Logarithm of the effective sample sizes of the joint posterior per second of runtime, in the nonparametric logistic regression models with $n=500, 1000, 2000$, obtained from 500 replicated datasets.}
	\label{plot:berESS}
\end{figure}

\section{Application to Pima diabetes data}
\label{sec:realdata}

In this section, we analyze the Pima dataset using the VS-knot spline approach with the intrinsic prior.
The Pima diabetes dataset consists of signs of diabetes and seven potential risk factors for $n=532$ Pima Indian women in Arizona \citep{smith1988using}.
We explore the relationship between the signs of diabetes and these risk factors using a GAM. The response variable $Y_i$ indicates the presence of diabetes (0: negative, 1: positive). For each individual $i$, the predictor variables (risk factors) are $pregnant_i$ (number of times the subject was pregnant),	$glucose_i$ (plasma glucose concentration in two hours in an oral glucose tolerance test, mg/dl), $pressure_i$ (diastolic blood pressure, mm/Hg), $triceps_i$ (triceps skin fold thickness, mm/Hg), $mass_i$ (body mass index, BMI), $pedigree_i$ (diabetes pedigree function), and $age_i$ (age).

To examine the relationship between $Y_i$ and the risk factors, we consider the following GAM with a logit link, 
\begin{align}
	\begin{split}\label{eqn:Pima}
		\log\frac{E(Y_i)}{1-E(Y_i)} &= \alpha + f_1(pregnant_i) + f_2(glucose_i) + f_3 (pressure_i) \\
		&\quad +f_4(triceps_i)+f_5 (mass_i)+f_6 (pedigree_i) +f_7 (age_i). 
	\end{split}
\end{align}
The individuals with missing values are removed from the analysis.
	For VS-knot splines, a set of knot candidates $\xi_j^c=\{\xi_{j1}^c,\dots, \xi_{jM_j}^c\}$ is determined from the unique values of the quantiles for each predictor variable.
	 Some predictors are discrete in the observed data (e.g., $pregnant_i$ and $age_i$). For each $j$, we set $M_j$ as the number of unique values among the 30 quantile values with equal weights, meaning that $M_j<30$ for some $j$.

\begin{figure}[t!]
	\centering
	\includegraphics[width=11cm]{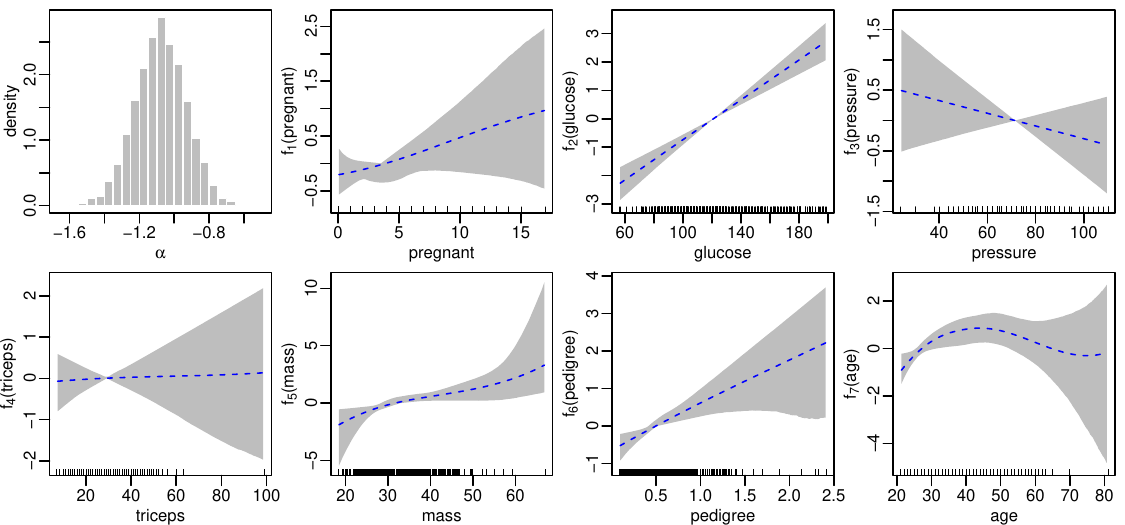}
	\caption{The posterior distribution of $\alpha$ and the pointwise posterior means (blue dashed curve) and pointwise $95\%$ credible bands (gray shade) of the functions $f_j$, $j=1,\dots,7$, for the model in \eqref{eqn:Pima}.}
	\label{plot:PimaSmooth}
\end{figure}

The results summarized in Figure~\ref{plot:PimaSmooth} largely align with intuition. Many variables show near-linear effects, while a few, such as $mass_i$ and $age_i$, exhibit clear nonlinear effects. The effect of $triceps_i$ appears to be negligible, suggesting that it may be worth considering the exclusion of this variable from the analysis.

\section{Discussion}
\label{sec:discussion}

This study examined BMS-based estimation methods for GAMs using the Laplace approximation with mixtures of g-priors. We establish a default prior by analyzing the behavior of the Bayes factor and presenting numerical results. Additionally, this study consolidates existing ideas on priors for knots and mixtures of g-priors.

A significant limitation of the BMS-based approach is that the Laplace approximation depends on the maximum likelihood estimator, which requires extensive computations. Given that the VS-knot spline approach proves effective, one potential solution is to use shrinkage priors for exponential family models in combination with a reasonable MCMC sampling algorithm \citep[e.g.,][]{schmidt2020bayesian}. Another avenue is to explore computationally less expensive likelihood approximations \citep[e.g.,][]{rossell2021approximate}. Additionally, investigating higher-order approximations could offer improved approximation performance \citep{shun1995laplace}.

\bibliographystyle{ba}
\bibliography{ref}

\begin{acks}[Acknowledgments]
	This research was supported by the Yonsei University Research Fund of 2021-22-0032 and the National Research Foundation of Korea (NRF) grant funded by the Korean government (MSIT) (2022R1C1C1006735, RS-2023-00217705). 
\end{acks}

\newpage

\renewcommand{\thesection}{S\arabic{section}}
\renewcommand{\theequation}{S\arabic{equation}}
\renewcommand{\thefigure}{S\arabic{figure}}
\renewcommand{\thetable}{S\arabic{table}}

\setcounter{figure}{0}
\setcounter{section}{0}
\setcounter{equation}{0}

\begin{frontmatter}
	\title{Supplementary Material of `Model Selection-Based Estimation for Generalized Additive Models Using Mixtures of g-priors: Towards Systematization'
	}
	\runtitle{Supplementary Material}
	
	\begin{aug}
		\author{\fnms{Gyeonghun} \snm{Kang}}
		\and
		\author{\fnms{Seonghyun} \snm{Jeong}}
		
		\runauthor{G. Kang and S. Jeong}
		
		
		
		
	\end{aug}
	
	\begin{abstract}
		This supplementary material covers the details of truncated compound confluent hypergeometric (tCCH) distributions, sampling strategies, proofs of propositions, derivation of the posterior through the Laplace approximation, an additional simulation study with Poisson and Gaussian regression models, and instructions for installing the R package.
	\end{abstract}
	
\end{frontmatter}

\section{Truncated compound confluent hypergeometric distributions}

The tCCH distribution, as formally defined by \citet{li2018mixtures}, is a slight modification of the generalized beta distribution defined by \citet{gordy1998generalization}. Specifically, we denote $V \sim \text{tCCH}(a,b,z,s,\nu,\kappa)$ if $V$ has a density of the form
\begin{align}
	f(u) &= \frac{\nu^a u^{a-1} (1-\nu u)^{b-1} [\kappa + (1-\kappa)\nu u]^{-z} e^{s/\nu}e^{-su}}
	{ \Phi_1 (b, z, a+b, s/\nu, 1-\kappa)B(a,b)} \mathbbm 1 \{0<u<1/\nu\},
	\label{seqn:tcchdensity}
\end{align}
where $a>0$, $b>0$, $z\in \mathbb{R}$, $s \in \mathbb{R}$,  $\nu \geq 1$ and $\kappa>0$.
A direct calculation yields the $k$th moment as:
\begin{align}
	E(V^k) &= \nu^{-k}\frac{B(a+k,b) \; \Phi_1\left(b, z, a+b+k, s/\nu, 1-\kappa\right)}
	{B(a,b) \Phi_1\left(b, z, a+b, s/\nu, 1-\kappa\right)}.
	\label{seqn:tcchmoment}
\end{align}

The tCCH distribution can be reduced to several other distributions depending on the parameter values. For example, it can take the form of a Gaussian hypergeometric distribution \citep{armero1994prior}, a confluent hypergeometric distribution \citep{gordy1998computationally}, a beta distribution, or a gamma distribution. For further details, see \citet[p.132, p.279]{gupta2004handbook}. Consequently, the marginal likelihood in \eqref{eqn:mixedg} is simplified based on the parameters of the tCCH prior.

\section{Proofs of the propositions}

\begin{proof}[Proof of Proposition~\ref{prop:ncspace}]
	Consider boundary knots $\{t^L,t^U\}$ and interior knots $\{t_1,\dots, t_M\}$ satisfying $t^L<t_1<\dots<t_M<t^U$. To concatenate the expressions, we write $t_0=t^L$ and $t_{M+1}=t^U$. The common expression of the natural cubic splines derived from the truncated cubic spline basis functions is given by
	\begin{align*}
		N_1^\ast(u)&=u, \\
		N_{k+2}^\ast(u)&=\dfrac{(u-t_k)_+^3 - (u-t_{M+1})_+^3}{t_{M+1} - t_k} - \dfrac{(u-t_{M})_+^3 - (u-t_{M+1})_+^3}{t_{M+1} - t_{M}},\quad k=0,\dots,M-1,
	\end{align*}
	(see, for example, Equations (5.4) and (5.5) in \citet{hastie2009elements}). Letting $N_0^\ast(u)=1$, it is well known that ${\mathcal N^\ast}=\{N_k^\ast,k=0, 1,\dots, M+1\}$ is a basis for the cubic spline space with the natural boundary conditions. Therefore, it suffices to show that there exists an injection $Q:\mathcal N\mapsto {\mathcal N^\ast}$. Given that $N_0= N_0^\ast$, $N_1=N_1^\ast$, $-N_{M+1}=N_2^\ast$ and $N_{k-1}-N_{M+1}=N_k^\ast$, $k=3,\dots,M+1$, we obtain
	\begin{align*}
		Q=\begin{pmatrix}
			1 & 0 & 0 & 0 & \dots & 0 & 0 \\
			0 & 1 & 0 & 0 & \dots & 0 & 0 \\
			0 & 0 & 0 & 0 & \dots & 0 & -1 \\
			0 & 0 & 1 & 0 & \dots & 0 & -1 \\
			0 & 0 & 0 & 1 & \dots & 0 & -1 \\
			\vdots & \vdots & \vdots & \vdots & \ddots & \vdots & \vdots \\
			0 & 0 & 0 & 0 & \dots & 1 & -1 
		\end{pmatrix},
	\end{align*}
	which is clearly nonsingular.
\end{proof}

\begin{proof}[Proof of Proposition~\ref{prop:selection}]
	Each of the basis terms $N(\cdot ; t^L,t^U,t_k)$, $k=1,\dots,M$, in $\mathcal N$ depends solely on $t^L$, $t^U$, and $t_k$.
	Thus, introducing a new knot point $t_\ast$ adds a new basis term $N(\cdot ; t^L,t^U,t_\ast)$ without affecting the existing basis terms. Conversely, removing an existing knot point $t_k$ eliminates the corresponding basis term $N(\cdot ; t^L,t^U,t_k)$ while leaving the other terms unchanged.
\end{proof}

\begin{proof}[Proof of Proposition~\ref{prop:bf}] Suppose the tCCH prior in \eqref{seqn:tcchdensity} is chosen.
	If $\hat\eta_{\xi_{(1)}} = \hat\eta_{\xi_{(2)}}$, then we obtain that
	\begin{align*}
		BF[\xi_{(1)};\xi_{(2)}] &=\nu^{-k/2} \frac{B((a+J_{\xi_{(2)}}+k)/2, b/2) }{B((a+J_{\xi_{(2)}})/2, b/2) }\\
		&\quad\times
		\frac{\Phi_1\left(b/2, r, (a+b+J_{\xi_{(2)}}+k)/2, (s+Q_{\xi_{(2)}})/(2\nu), 1-\kappa\right)  }
		{\Phi_1\left(b/2, r, (a+b+J_{\xi_{(2)}})/2, (s+Q_{\xi_{(2)}})/(2\nu), 1-\kappa\right)  },
	\end{align*}
	using the expression in \eqref{eqn:mixedg}. 
	Given that the posterior of $(g+1)^{-1}$ is the tCCH distribution in the first line of \eqref{eqn:posterior}, the assertion is easily verified using \eqref{seqn:tcchmoment} if the tCCH prior is used for $(g+1)^{-1}$. A similar proof can be extended to the case of the unit information prior.
\end{proof}

\section{Laplace approximation to the marginal likelihood}
\label{sec:postderv}

Following the proof of Proposition~1 in\citet{li2018mixtures}, the Laplace approximation of the likelihood yields
\begin{align*}
	p(Y\mid \alpha, \beta_\xi, \xi) \approx	p(Y\mid \hat\eta_\xi) \exp \!\left\{
	-\frac{(\alpha-\hat\alpha_\xi+m)^2}{2\text{tr}(\mathcal{J}_n(\hat\eta_\xi))}
	-\frac{1}{2}(\beta_\xi - \hat\beta_\xi)^T \tilde{B}_\xi^T \mathcal{J}_n(\hat\eta_\xi) \tilde{B}_\xi (\beta_\xi - \hat\beta_\xi)
	\right\}\!,
\end{align*}
where $m=\text{tr}(\mathcal{J}_n(\hat\eta_\xi))^{-1}1_n^T \mathcal{J}_n(\hat\eta_\xi) B_\xi(\beta_\xi-\hat\beta_\xi)$. Combined with the prior $ \pi(\alpha) \pi(\beta_\xi\mid g,\xi)$ in \eqref{eqn:alphaprior} and \eqref{eqn:gprior}, this verifies the second and third lines of \eqref{eqn:posterior}.
Thus, the verification of \eqref{eqn:fixedg} is straightforward, as shown by
\begin{align*}
	p(Y\mid g, \xi) 
	&\approx
	\int \int \pi(\alpha) \pi(\beta_\xi\mid g,\xi)
	p(Y\mid \alpha, \beta_\xi,\xi) d\alpha d\beta_\xi\nonumber
	\\&=
	p(Y\mid \hat\eta_\xi)\text{tr}(\mathcal{J}_n(\hat{\eta}_\xi))^{-1/2}
	(g+1)^{-{J_\xi}/{2}} 
	\exp \!\left(-\frac{Q_\xi}{2(g+1)}\right).
\end{align*}
Combined with the tCCH prior in \eqref{eqn:tcchprior}, this verifies the first line of \eqref{eqn:posterior} using the density in \eqref{seqn:tcchdensity}.
Now, we can marginalize out $g$, that is,
\begin{align*}
	\begin{split}
		&p(Y\mid \xi)\\
		&~\approx
		\frac{p(Y\mid\hat{\eta}_{\xi})\text{tr}(\mathcal{J}_n(\hat{\eta}_\xi))^{-1/2} \nu^{a/2} e^{s/(2\nu)}/B(a/2, b/2)}{\Phi_1\left(b/2, r, (a+b)/2, s/(2\nu), 1-\kappa\right)}
		\int_0^{1/\nu} 
		\frac{u^{(a+J_\xi)/2-1} (1-\nu u)^{b/2-1} }{[\kappa + (1-\kappa)\nu u]^r e^{(s+Q_\xi)u/2} } du \\
		&~= p(Y\mid\hat{\eta}_{\xi}) \text{tr}(\mathcal{J}_n(\hat{\eta}_\xi))^{-1/2} 
		\nu^{-J_\xi / 2} \exp \!\left(-\frac{Q_\xi}{2\nu}\right)\frac{B((a+J_\xi)/2, b/2) }{B(a/2, b/2) }\\
		&~\quad\times \Phi_1\!\left(\frac{b}{2}, r, \frac{a+b+J_\xi}{2}, \frac{s+Q_\xi}{2\nu}, 1-\kappa\right)  
		\bigg/ \Phi_1\!\left(\frac{b}{2}, r, \frac{a+b}{2}, \frac{s}{2\nu}, 1-\kappa\right),
	\end{split}	
\end{align*}
where the equality holds by the change of variables $v=\nu u$. This verifies \eqref{eqn:mixedg}.

\section{Sampling from tCCH distributions}
\subsection{Exact sampling when $b=1$ and $\kappa=1$}
\label{sec:sampling}

Given \eqref{seqn:tcchdensity}, the density of $ \text{tCCH}(a,1,z,s,\nu,1)$ has the form $f(u)\propto u^{a-1} e^{-su} \mathbbm 1 \{0<u<1/\nu\}$, which is the gamma density truncated to $0<u<1/\nu$. Therefore, exact sampling from tCCH distributions using the inverse transform method is straightforward in this case. Combining the first line of \eqref{eqn:posterior} with Table~\ref{table:mixg} reveals that the posterior distribution $\Pi((g+1)^{-1}\mid Y,\xi)$ simplifies to a truncated gamma distribution when using the uniform, hyper-g, ZS-adapted, or robust prior. 

\subsection{Slice sampling when $z>0$}

The posterior distributions resulting from the hyper-g/n, beta-prime, and intrinsic prior distributions do not simplify to truncated gamma distributions. In these cases, we can utilize a version of slice sampling. 

To generate samples from $V\sim \text{tCCH}(a,b,z,s,\nu,\kappa)$ with $z>0$, we use the change of variable $W=\nu V$ with reparameterization $\xi=\kappa^{-1}-1$ and $\zeta=s/\nu$. The density of $W$ is given by
\begin{align*}
	f(w) &\propto w^{a-1} (1-w)^{b-1} (1 + \xi w)^{-z} e^{-\zeta w}\mathbbm 1\{0<w<1\} \\
	&= w^{a-1} (1-w)^{b-1} e^{-\zeta w}\mathbbm 1\{0<w<1\} \Gamma(z)^{-1}\int_{0}^{\infty}  t^{z-1}e^{-(1+\xi w)t}dt,
\end{align*}
where $\Gamma$ is the gamma function. Therefore, $f$ can be obtained as the marginal density from the joint density
\begin{align*}
	h(w,t,u_1,u_2) &\propto w^{a-1} (1-w)^{b-1}t^{z-1}e^{-t}\\
	&\quad\times\mathbbm 1\{0<u_1<e^{-\zeta w}\}\mathbbm 1\{0<u_2<e^{-\xi w t}\}   \mathbbm 1\{0<w<1\}.
\end{align*}
Given that $\xi>0$ and $\zeta>0$, a slice sampler is constructed as
\begin{align*}
	U_1\mid U_2, T, W &\sim \text{Unif}(0, e^{-\zeta W}),\\
	U_2\mid U_1, T, W &\sim \text{Unif}(0, e^{-\xi T W}),\\
	T\mid U_1,U_2,W &\sim \text{Gamma}(z, 1)\times\mathbbm 1\{-\infty < T < -(\log U_2)/(\xi W)\},\\
	W\mid U_1,U_2, T &\sim \text{Beta}(a,b)\times\mathbbm 1\{-\min\{\log U_1,(\log U_2)/T\}<W<1 \}.
\end{align*}

\section{Gaussian additive regression with unknown precision}
\label{sec:Gaussian}
Thus far, we have focused on GAMs with a known dispersion parameter $\phi$ for the exponential family models. Now, we shift to a more traditional setup by assuming a Gaussian distribution for $Y_i$, treating $\phi$ as an unknown parameter.
In the context of Gaussian additive regression, the response variable $Y_i$ is expressed as
\begin{align}\label{eqn:gam_gau}
	Y_i = \alpha + \sum_{j=1}^p f_j (x_{ij})+\epsilon_i,\quad \epsilon_i \sim {\text N}(0,\phi^{-1}), \quad i = 1, \dots, n,
\end{align}
where the precision parameter $\phi$ is typically unknown. Although model \eqref{eqn:gam_gau} also falls within the GAM framework, the presence of the unknown precision parameter $\phi$ introduces some distinctions.
Let $\eta=(\eta_1,\dots,\eta_n)^T$ be the vector of the mean responses, that is,  $\eta_i=E(Y_i)$.
We parameterize $\eta$ as $\eta=\alpha 1_n + B_\xi\beta_\xi$ using $\alpha$, $B_\xi$, and $\beta_\xi$ defined in Section~\ref{sec2}. 
In line with the convention, an improper prior is assigned to $(\alpha,\phi)$, that is,
\begin{align*}
	\pi(\alpha,\phi)\propto 1/\phi.
\end{align*}
Given that the information matrix of a Gaussian distribution is the identity matrix, we can easily verify that the prior in \eqref{eqn:gprior} simplifies to the standard g-prior distribution \citep{zellner1986assessing,liang2008mixtures},
\begin{align*}
	\beta_\xi\mid \phi,g,\xi &\sim N\big(0, g\phi^{-1} (B_\xi^T B_\xi)^{-1} \big).
\end{align*}
Note that, in the Gaussian case, we have $\tilde{B}_\xi=B_\xi$ because the columns of $B_\xi$ are centered.

By combining the marginal likelihood with one of the priors for $\xi$ discussed in Section~\ref{sec:knotprior}, we derive the marginal posterior of $\xi$, denoted as $\Pi(\xi\mid Y)$. Calculating the marginal likelihood is complex because it requires integrating not only $g$ but also $\phi$.
First, it is well known that
\begin{align}
	p(Y\mid g, \xi) = 
	p(Y\mid \varnothing)
	\frac{(1+g)^{(n - J_\xi - 1)/2}}{[1 + g(1-R_\xi^2)] ^{(n-1)/2}},
	\label{eqn:gausmg}
\end{align}
where $p(Y\mid \varnothing) = 
n^{-1/2} (2\pi)^{-(n-1)/2} \Gamma((n-1)/2) (\|Y - \bar Y 1_n  \|^2/2)^{-(n-1)/2}$ is the marginal likelihood in the intercept-only model, $R_\xi^2 = \|B_\xi( B_\xi^T B_\xi)^{-1}  B_\xi^T  Y \|^2/ \|Y - \bar Y 1_n \|^2$ is the coefficient of determination with $\xi$, and $\bar Y=n^{-1}\sum_{i=1}^n Y_i$ is the average of the observations. For the unit information prior $\Pi(g)=\delta_n(g)$, the marginal likelihood $p(Y\mid \xi)$ is readily available from the expression in \eqref{eqn:gausmg}.
By assigning the tCCH prior in \eqref{eqn:tcchprior} to $(g+1)^{-1}$, \citet{li2018mixtures} shows that if $r=0$ (or $\kappa=1$ equivalently),
\begin{align*}
	\begin{split}
		p(Y\mid \xi) &= 
		\frac{p(Y\mid \varnothing)}{\nu^{J_\xi/2} [1 - (1 - \nu^{-1}) R_\xi^2]^{(n-1)/2}}
		\frac{B((a+J_\xi)/2, b/2)}{B(a/2, b/2)} 
		\\
		&\quad \times 
		\Phi_1 \bigg(\frac{b}{2}, \frac{n-1}{2}, \frac{a+b+J_\xi}{2}, \frac{s}{2\nu}, \frac{R_\xi^2}{\nu - (\nu-1) R_\xi^2}\bigg) \bigg / \;_1F_1\bigg(\frac{b}{2}, \frac{a+b}{2}, \frac{s}{2\nu}\bigg),
	\end{split}
\end{align*}
and if $s=0$,
\begin{align*}
	\begin{split}
		p(Y\mid \xi) &=
		\frac{p(Y\mid \varnothing) \kappa^{(a+J_\xi-2r)/2}}{\nu^{J_\xi/2} (1 - R_\xi^2)^{(n-1)/2}}
		\frac{B((a+J_\xi)/2, b/2)}{B(a/2, b/2)} 
		\\
		&\quad \times 
		F_1 \bigg( \frac{a+J_\xi}{2};\frac{a+b+J_\xi+1-n-2r}{2},\frac{n-1}{2}; \\ 
		&\qquad\qquad\frac{a+b+J_\xi}{2};1-\kappa,1-\kappa-\frac{R_\xi^2\kappa}{(1-R_\xi^2)v}\bigg)\bigg / \,_2F_1\bigg(r,\frac{b}{2}; \frac{a+b}{2}; 1-\kappa\bigg),
	\end{split}
\end{align*}
where $\,_1 F_1 (\alpha,\gamma,x) = \Phi_1(\alpha,0,\gamma,x,0)$, $\gamma>\alpha>0$, is the confluent hypergeometric function, $\,_2 F_1 (\beta,\alpha;\gamma;y) = \Phi_1(\alpha,\beta,\gamma,0,y)$, $\gamma>\alpha>0$, is the Gaussian hypergeometric function, and $F_1$ is the  the Appell hypergeometric function defined as $F_1(\alpha;\beta,\beta';\gamma;x,y)=B(\gamma - \alpha, \alpha)^{-1} \int_{0}^{1} u^{\alpha - 1} (1-u)^{\gamma - \alpha -1}(1-xu)^{-\beta}(1-yu)^{-\beta'}du$ for $\gamma>\alpha>0$. The prior distributions listed in Table~\ref{table:mixg} fall into one of the above two cases. While these expressions can be further simplified depending on the hyperparameters of the tCCH prior, numerical evaluation of the transcendental functions is often required.
The only exception is the beta-prime prior, which offers a closed-form expression for the marginal likelihood without involving hypergeometric-type transcendental functions; see \citet{maruyama2011fully}.

In the Gaussian case, the conditional posterior $\Pi((g+1)^{-1}\mid Y,\xi)$ is no longer a conjugate update of the tCCH prior. Nonetheless, it can be simplified with certain hyperparameter specifications of the tCCH prior, and sampling from $\Pi((g+1)^{-1}\mid Y,\xi)$ can be efficiently performed by introducing auxiliary variables. In particular, the beta-prime prior provides an exact sampling scheme from the beta distribution; see \citet{jeong2021bayesian}. 
The remaining specifications of the joint posterior can be derived through direct calculations as
\begin{align*}
	\phi \mid Y,g,\xi &\sim \text{Gamma} \!
	\left(\frac{n-1}{2}, \frac{\|Y-\bar Y 1_n  \|^2 [1+g(1-R_\xi^2)]}{2(1+g)}\right), \\
	\alpha \mid Y, \phi, g, \xi &\sim 
	\text{N}\!\left(\bar Y , \phi^{-1}/n\right),\\
	\beta_\xi \mid Y, \phi,g,\xi &\sim 
	\text{N}\!\left(\frac{g}{g+1}\hat\beta_\xi,\frac{g \phi^{-1}}{g+1}(B_\xi^T  B_\xi)^{-1}\right).
\end{align*}
The marginal posterior of $\xi$, $\Pi(\xi\mid Y)$, can be readily obtained from the expressions for the marginal likelihood provided above.

\begin{figure}[t!]
	\centering
	\subfloat[\centering Pointwise posterior mean estimates of $f_1$ in $100$ replications]{\includegraphics[width=1\textwidth]{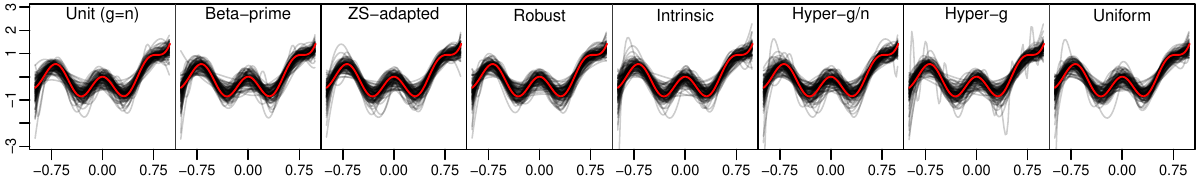}}\\
	\subfloat[\centering Pointwise posterior mean estimates of $f_2$ in $100$ replications]{\includegraphics[width=1\textwidth]{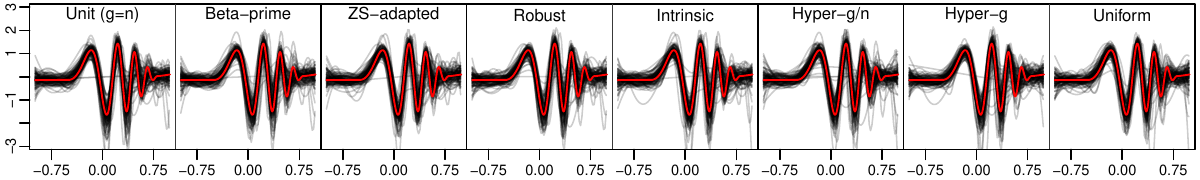}}\\
	\subfloat[\centering Pointwise posterior mean estimates of $f_3$ in $100$ replications]{\includegraphics[width=1\textwidth]{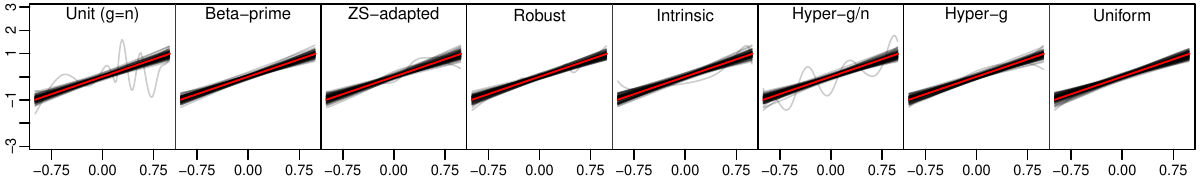}}\\
	\subfloat[\centering Pointwise posterior mean estimates of $f_4$ in $100$ replications]{\includegraphics[width=1\textwidth]{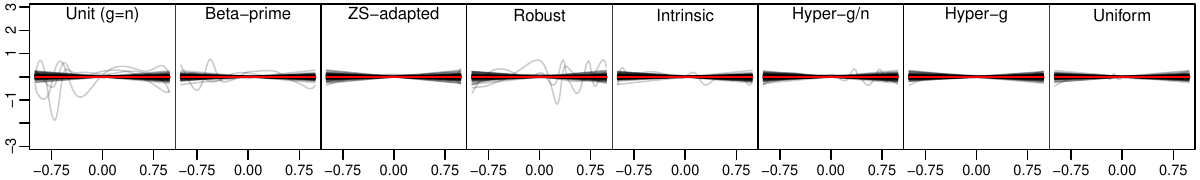}}
	\caption{Pointwise posterior means (gray) of $f_1$, $f_2$, $f_3$, and $f_4$ in the nonparametric Poisson regression model with $n=100$, obtained from randomly chosen 100 replicated datasets, along with the true function (red).}
	\label{plot:poi1}
\end{figure}

\begin{figure}[t!]
	\centering
	\subfloat[\centering Logarithm of RMSE for $f_1$]{\includegraphics[width = 0.47\textwidth]{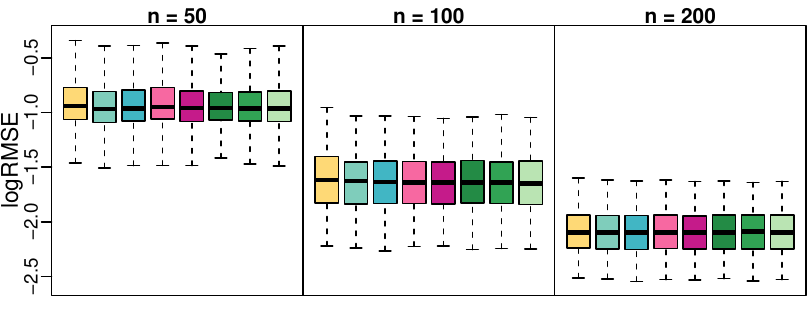}}
	~
	\subfloat[\centering Coverage probabilities for $f_1$]{\includegraphics[width = 0.47\textwidth]{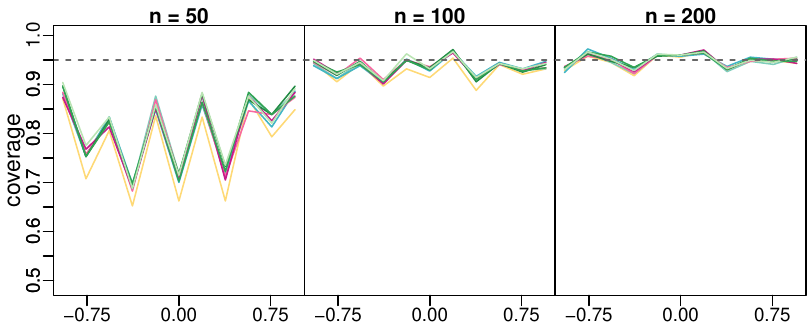}}
	\\
	\subfloat[\centering Logarithm of RMSE for $f_2$]{\includegraphics[width = 0.47\textwidth]{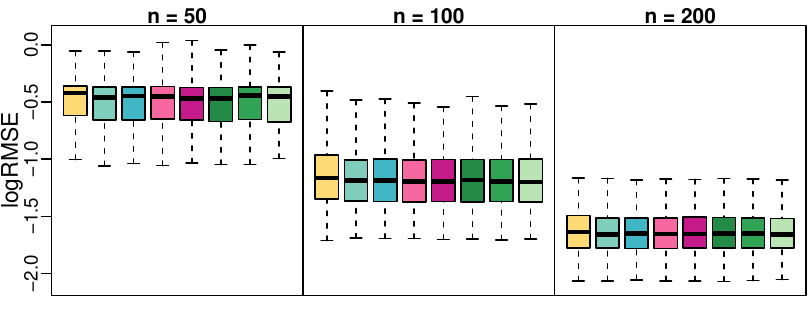}}
	~
	\subfloat[\centering Coverage probabilities for $f_2$]{\includegraphics[width = 0.47\textwidth]{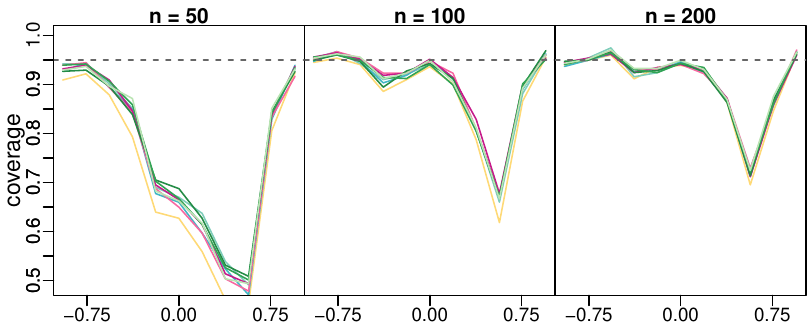}}
	\\
	\subfloat[\centering Logarithm of RMSE for $f_3$]{\includegraphics[width = 0.47\textwidth]{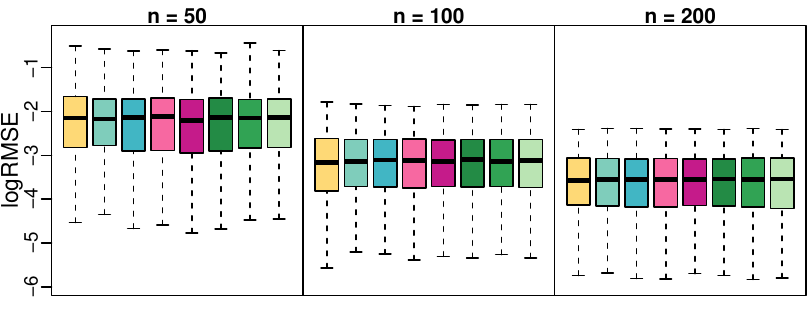}}
	~
	\subfloat[\centering Coverage probabilities for $f_3$]{\includegraphics[width = 0.47\textwidth]{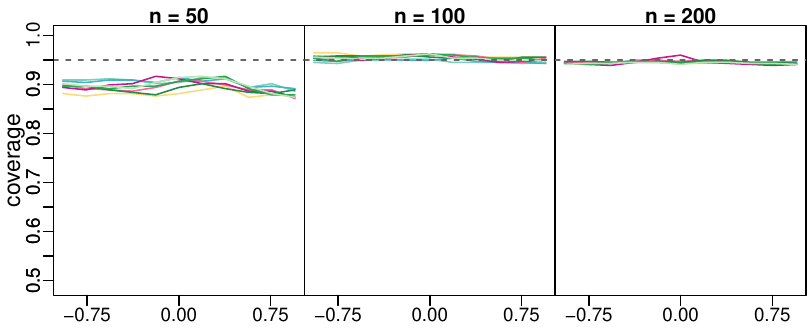}}
	\\
	\subfloat[\centering Logarithm of RMSE for $f_4$]{\includegraphics[width = 0.47\textwidth]{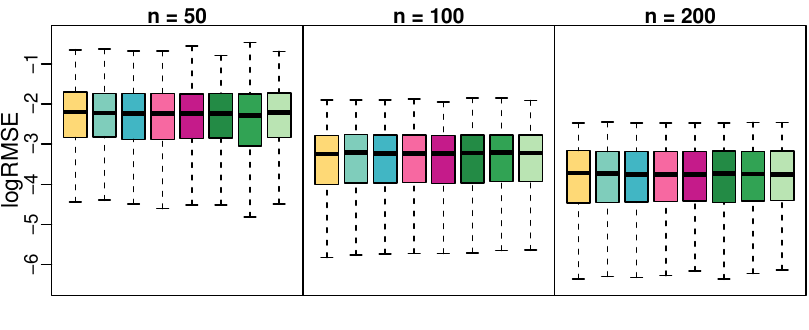}}
	~
	\subfloat[\centering Coverage probabilities for $f_4$]{\includegraphics[width = 0.47\textwidth]{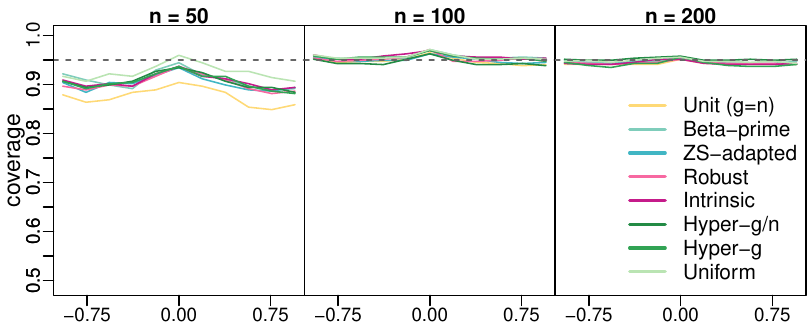}}
	\caption{Logarithm of RMSE and coverage probabilities for $f_1$, $f_2$, $f_3$, and $f_4$ in the nonparametric Poisson regression models with $n=50, 100, 200$, obtained from 500 replicated datasets. Outliers are excluded to improve visualization.}
	\label{plot:poi2}
\end{figure}

\begin{figure}[t!]
	\centering
	\subfloat[\centering Pointwise posterior mean estimates of $f_1$ in $100$ replications]{\includegraphics[width=1\textwidth]{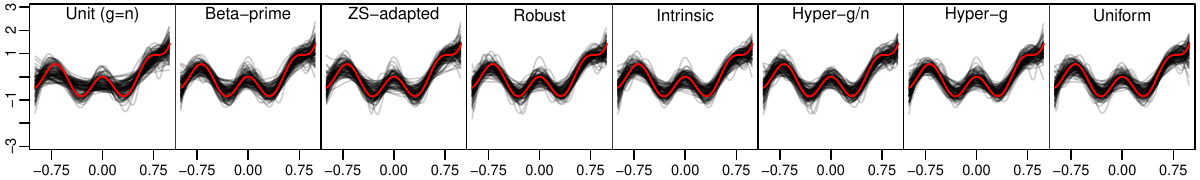}}\\
	\subfloat[\centering Pointwise posterior mean estimates of $f_2$ in $100$ replications]{\includegraphics[width=1\textwidth]{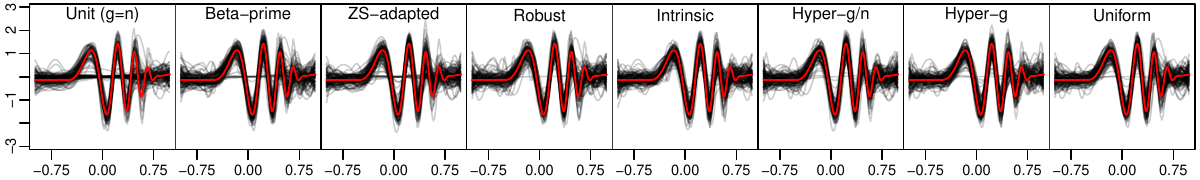}}\\
	\subfloat[\centering Pointwise posterior mean estimates of $f_3$ in $100$ replications]{\includegraphics[width=1\textwidth]{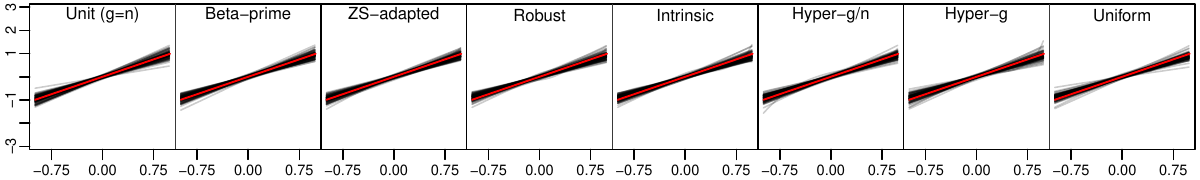}}\\
	\subfloat[\centering Pointwise posterior mean estimates of $f_4$ in $100$ replications]{\includegraphics[width=1\textwidth]{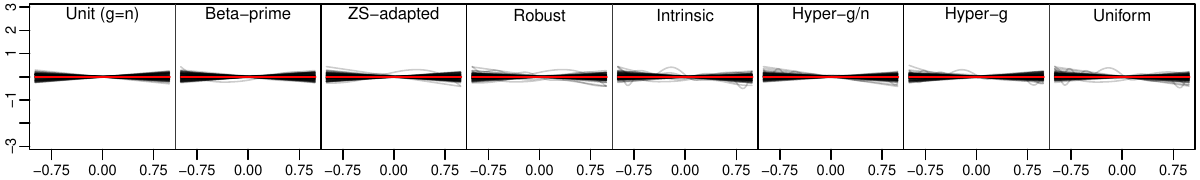}}
	\caption{Pointwise posterior means (gray) of $f_1$, $f_2$, $f_3$ and $f_4$ in the nonparametric Gaussian regression model with $n=200$, obtained from randomly chosen 100 replicated datasets, along with the true function (red).	}
	\label{plot:gau1}
\end{figure}

\begin{figure}[t!]
	\centering
	\subfloat[\centering Logarithm of RMSE for $f_1$]{\includegraphics[width = 0.47\textwidth]{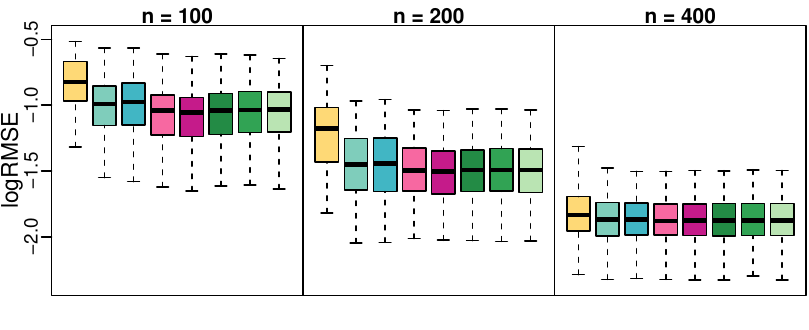}}
	~
	\subfloat[\centering Coverage probabilities for $f_1$]{\includegraphics[width = 0.47\textwidth]{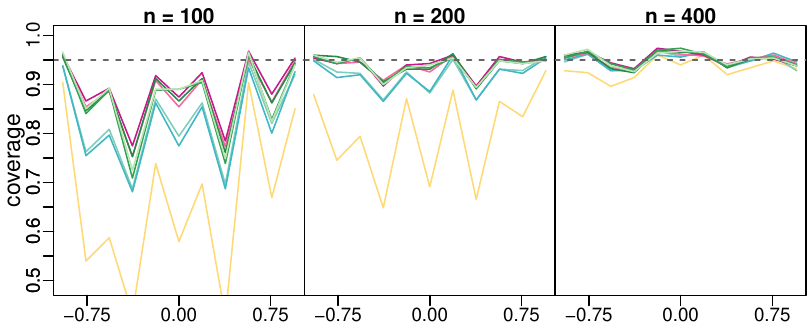}}
	\\
	\subfloat[\centering Logarithm of RMSE for $f_2$]{\includegraphics[width = 0.47\textwidth]{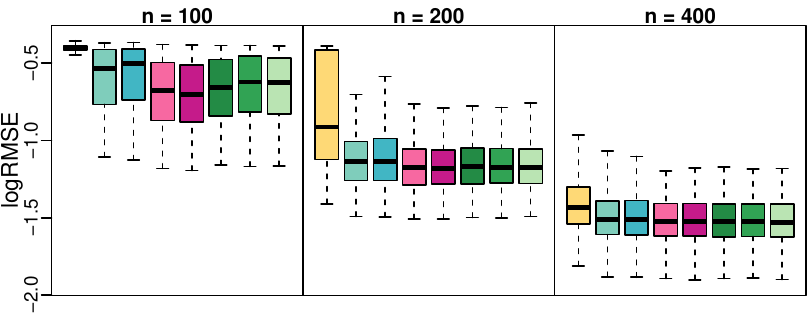}}
	~
	\subfloat[\centering Coverage probabilities for $f_2$]{\includegraphics[width = 0.47\textwidth]{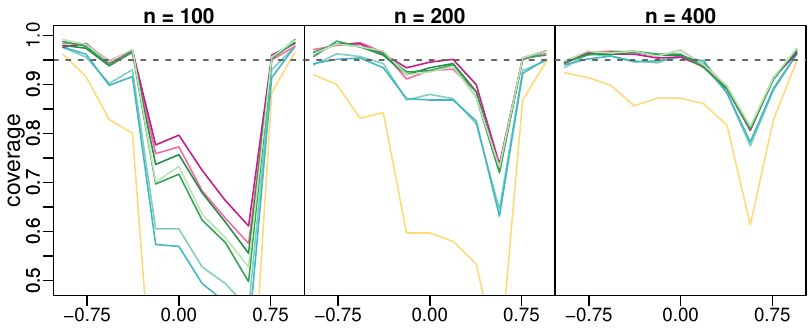}}
	\\
	\subfloat[\centering Logarithm of RMSE for $f_3$]{\includegraphics[width = 0.47\textwidth]{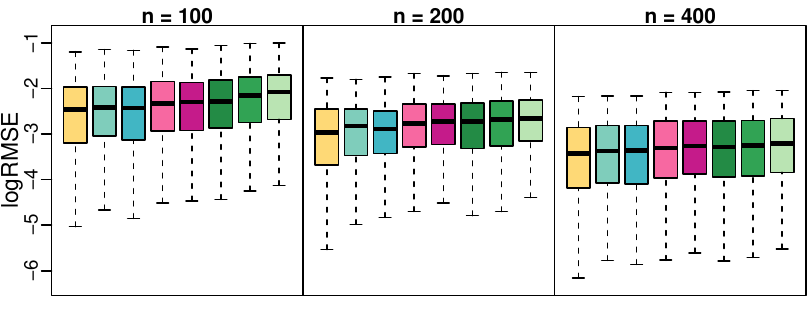}}
	~
	\subfloat[\centering Coverage probabilities for $f_3$]{\includegraphics[width = 0.47\textwidth]{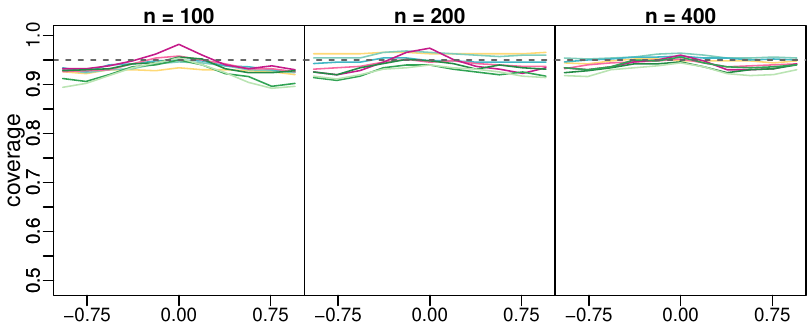}}
	\\
	\subfloat[\centering Logarithm of RMSE for $f_4$]{\includegraphics[width = 0.47\textwidth]{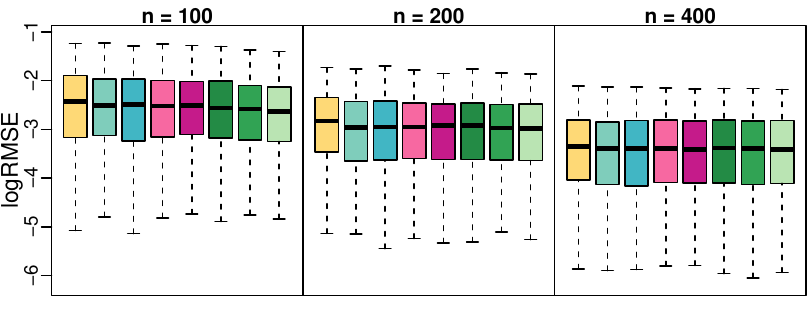}}
	~
	\subfloat[\centering Coverage probabilities for $f_4$]{\includegraphics[width = 0.47\textwidth]{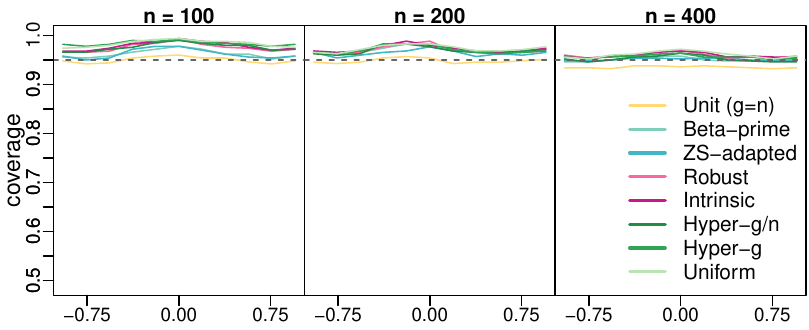}}
	\caption{Logarithm of RMSE and coverage probabilities for $f_1$, $f_2$, $f_3$ and $f_4$ in the nonparametric Gaussian regression models with $n=100, 200, 400$, obtained from 500 replicated datasets. Outliers are excluded to improve visualization.}
	\label{plot:gau2}
\end{figure}

\begin{figure}[t!]
	\centering
	\subfloat[\centering Pointwise posterior mean estimates of $f_1$ in $100$ replications]{\includegraphics[width=1\textwidth]{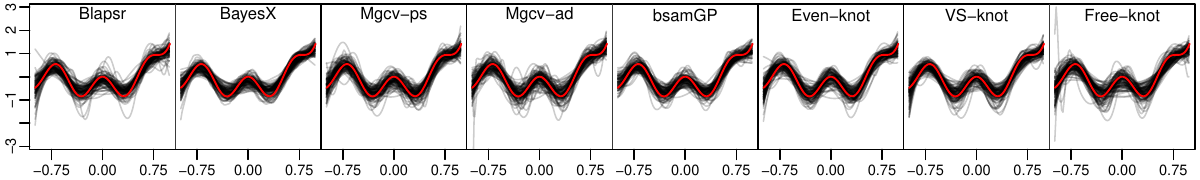}}\\
	\subfloat[\centering Pointwise posterior mean estimates of $f_2$ in $100$ replications]{\includegraphics[width=1\textwidth]{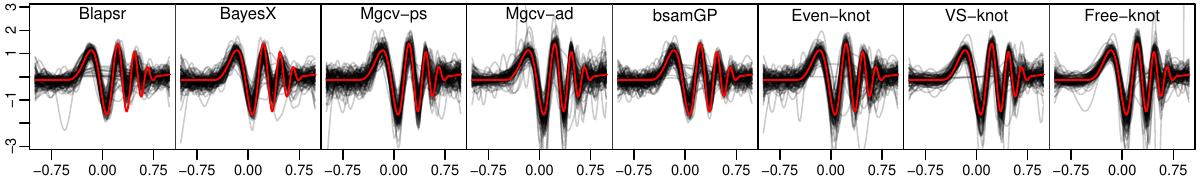}}\\
	\subfloat[\centering Pointwise posterior mean estimates of $f_3$ in $100$ replications]{\includegraphics[width=1\textwidth]{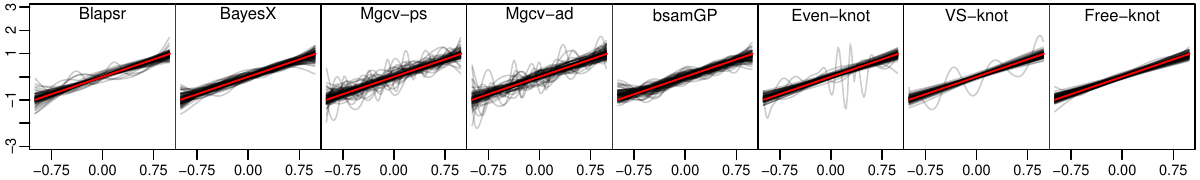}}\\
	\subfloat[\centering Pointwise posterior mean estimates of $f_4$ in $100$ replications]{\includegraphics[width=1\textwidth]{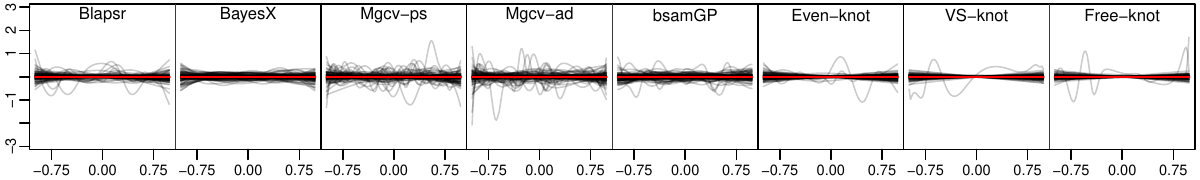}}
	\caption{Pointwise posterior means (gray) of $f_1$, $f_2$, $f_3$, and $f_4$ in the nonparametric Poisson regression model with $n=100$, obtained from randomly chosen 100 replicated datasets, along with the true function (red).}
	\label{plot:poi3}
\end{figure}

\begin{figure}[t!]
	\centering
	\subfloat[\centering Logarithm of RMSE for $f_1$]{\includegraphics[width = 0.47\textwidth]{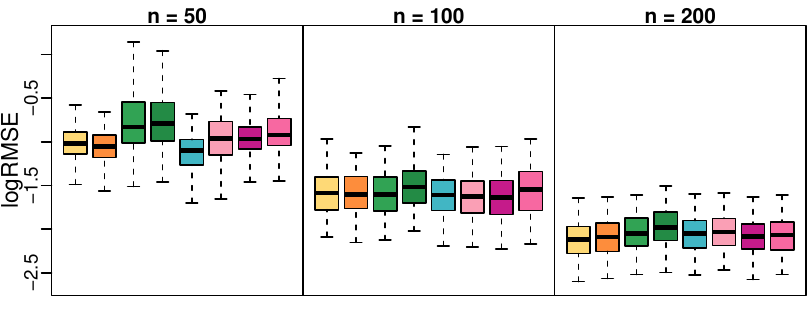}}
	~
	\subfloat[\centering Coverage probabilities for $f_1$]{\includegraphics[width = 0.47\textwidth]{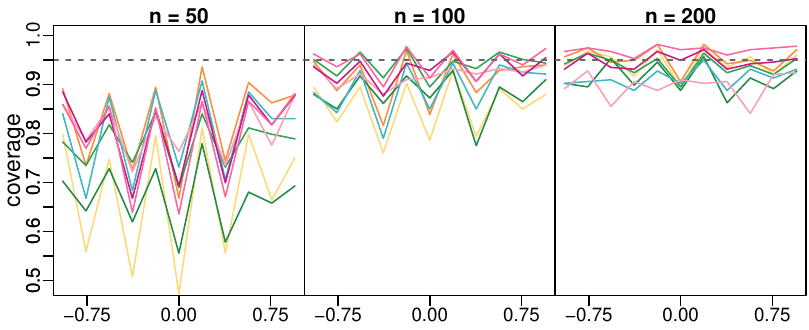}}
	\\
	\subfloat[\centering Logarithm of RMSE for $f_2$]{\includegraphics[width = 0.47\textwidth]{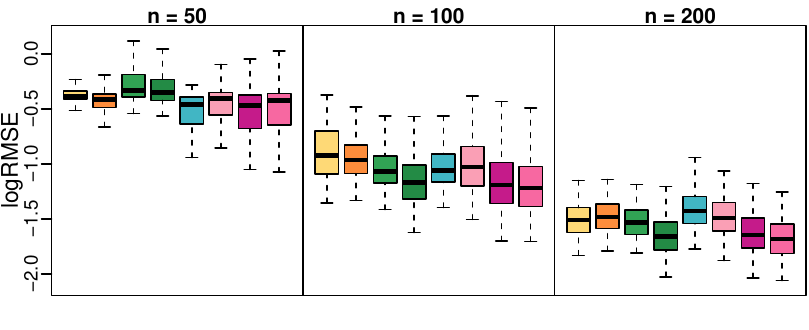}}
	~
	\subfloat[\centering Coverage probabilities for $f_2$]{\includegraphics[width = 0.47\textwidth]{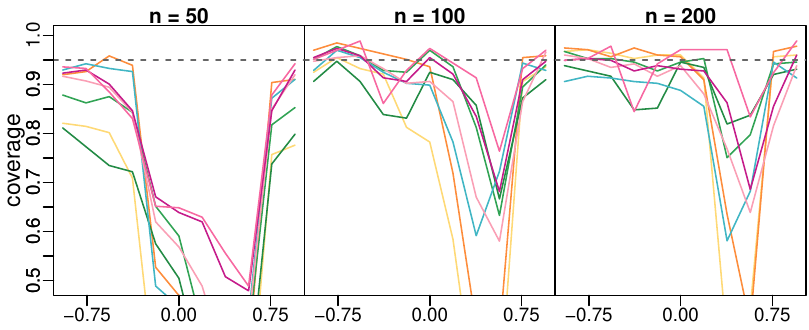}}
	\\
	\subfloat[\centering Logarithm of RMSE for $f_3$]{\includegraphics[width = 0.47\textwidth]{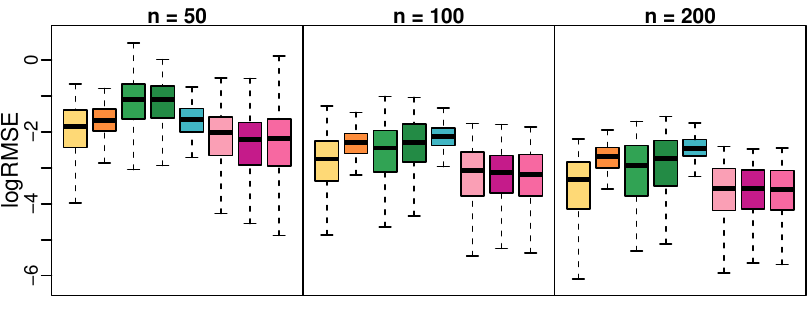}}
	~
	\subfloat[\centering Coverage probabilities for $f_3$]{\includegraphics[width = 0.47\textwidth]{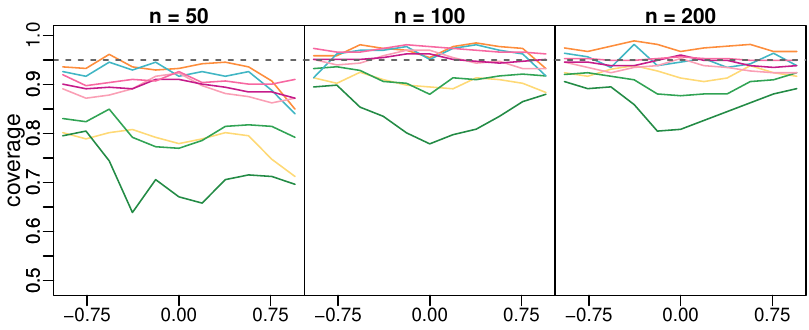}}
	\\
	\subfloat[\centering Logarithm of RMSE for $f_4$]{\includegraphics[width = 0.47\textwidth]{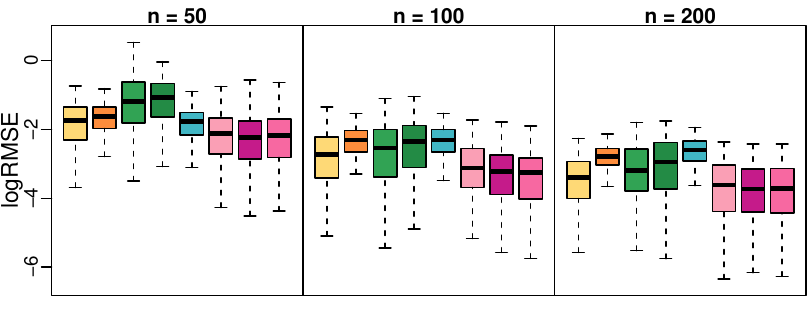}}
	~
	\subfloat[\centering Coverage probabilities for $f_4$]{\includegraphics[width = 0.47\textwidth]{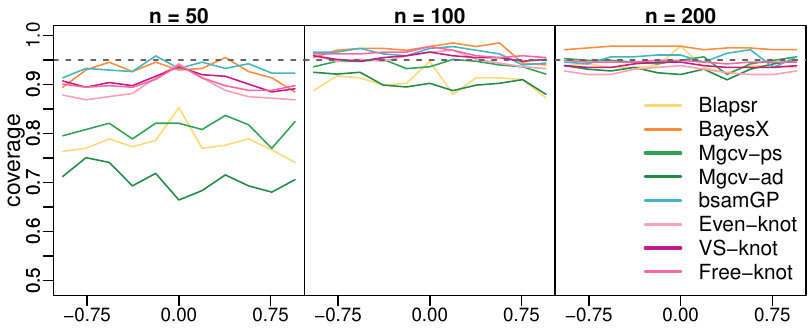}}
	\caption{Logarithm of RMSE and coverage probabilities for $f_1$, $f_2$, $f_3$, and $f_4$ in the nonparametric Poisson regression models with $n=50, 100, 200$, obtained from 500 replicated datasets. Outliers are excluded to improve visualization.}
	\label{plot:poi4}
\end{figure}

\begin{figure}[t!]
	\centering
	\subfloat[\centering Pointwise posterior mean estimates of $f_1$ in $100$ replications]{\includegraphics[width=1\textwidth]{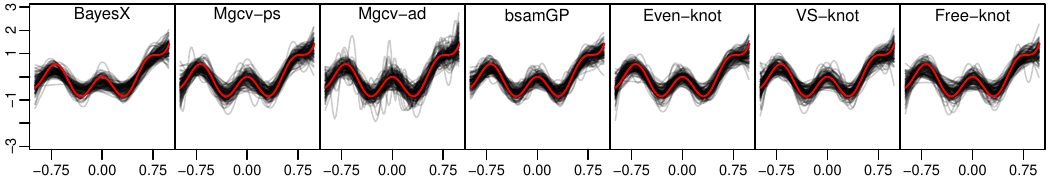}}\\
	\subfloat[\centering Pointwise posterior mean estimates of $f_2$ in $100$ replications]{\includegraphics[width=1\textwidth]{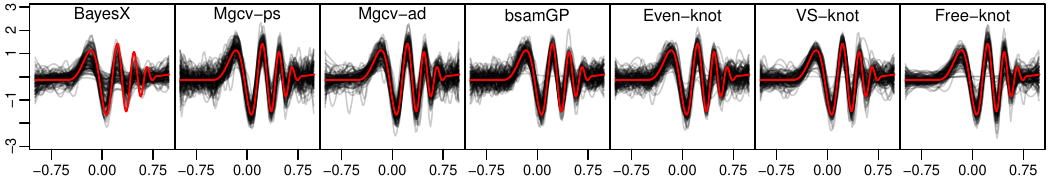}}\\
	\subfloat[\centering Pointwise posterior mean estimates of $f_3$ in $100$ replications]{\includegraphics[width=1\textwidth]{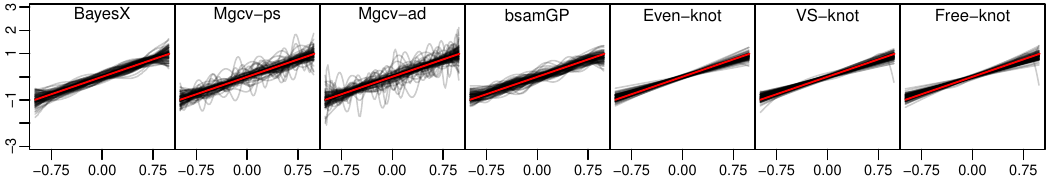}}\\
	\subfloat[\centering Pointwise posterior mean estimates of $f_4$ in $100$ replications]{\includegraphics[width=1\textwidth]{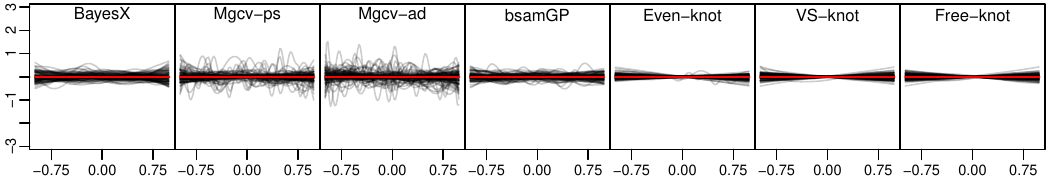}}
	\caption{Pointwise posterior means (gray) of $f_1$, $f_2$, $f_3$ and $f_4$ in the nonparametric Gaussian regression model with $n=200$, obtained from randomly chosen 100 replicated datasets, along with the true function (red).	}
	\label{plot:gau3}
\end{figure}

\begin{figure}[t!]
	\centering
	\subfloat[\centering Logarithm of RMSE for $f_1$]{\includegraphics[width = 0.47\textwidth]{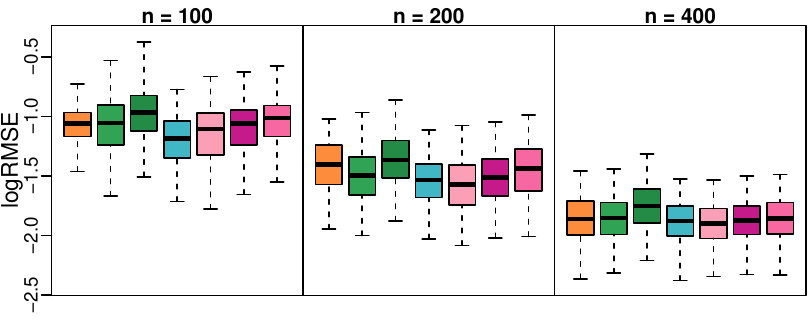}}
	~
	\subfloat[\centering Coverage probabilities for $f_1$]{\includegraphics[width = 0.47\textwidth]{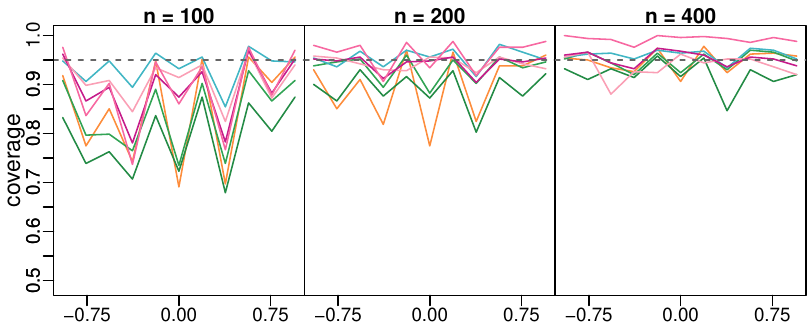}}
	\\
	\subfloat[\centering Logarithm of RMSE for $f_2$]{\includegraphics[width = 0.47\textwidth]{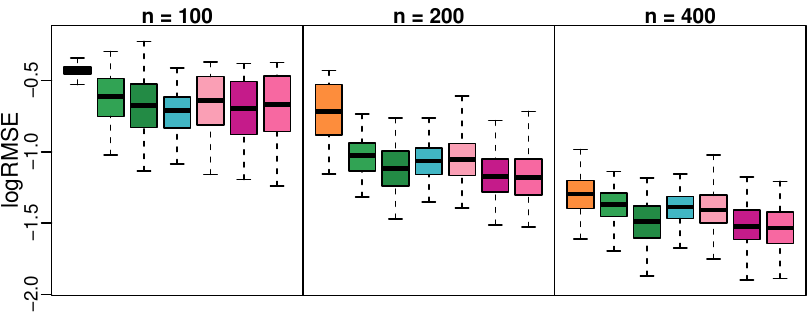}}
	~
	\subfloat[\centering Coverage probabilities for $f_2$]{\includegraphics[width = 0.47\textwidth]{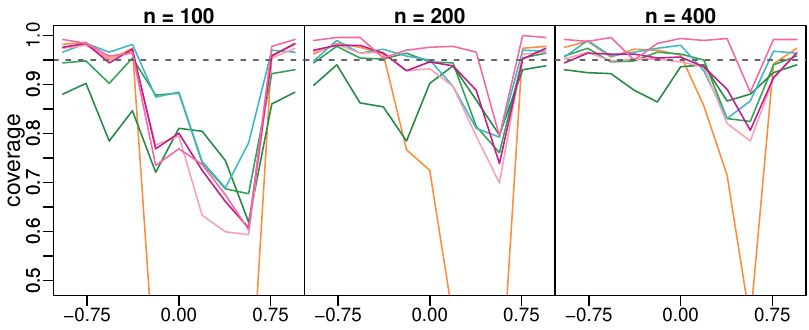}}
	\\
	\subfloat[\centering Logarithm of RMSE for $f_3$]{\includegraphics[width = 0.47\textwidth]{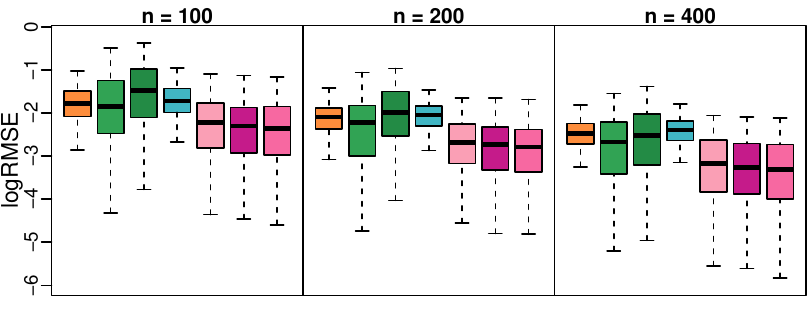}}
	~
	\subfloat[\centering Coverage probabilities for $f_3$]{\includegraphics[width = 0.47\textwidth]{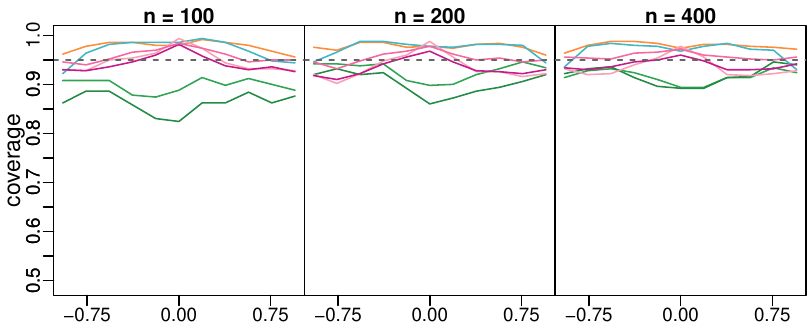}}
	\\
	\subfloat[\centering Logarithm of RMSE for $f_4$]{\includegraphics[width = 0.47\textwidth]{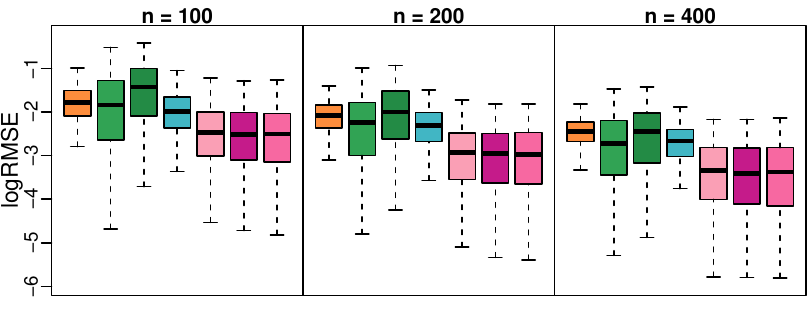}}
	~
	\subfloat[\centering Coverage probabilities for $f_4$]{\includegraphics[width = 0.47\textwidth]{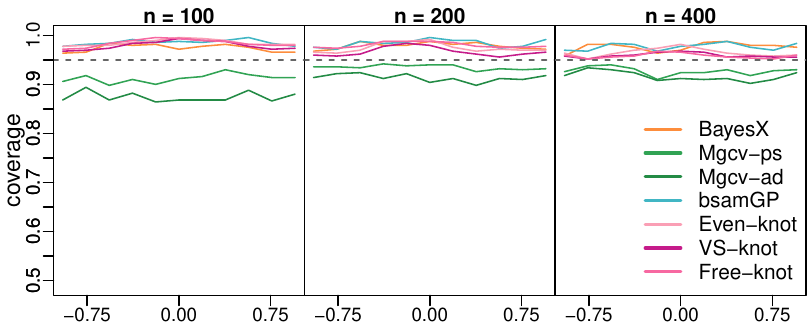}}
	\caption{Logarithm of RMSE and coverage probabilities for $f_1$, $f_2$, $f_3$ and $f_4$ in the nonparametric Gaussian regression models with $n=100, 200, 400$, obtained from 500 replicated datasets. Outliers are excluded to improve visualization.}
	\label{plot:gau4}
\end{figure}

\begin{figure}[t!]
	\centering
	\includegraphics[width = 10cm]{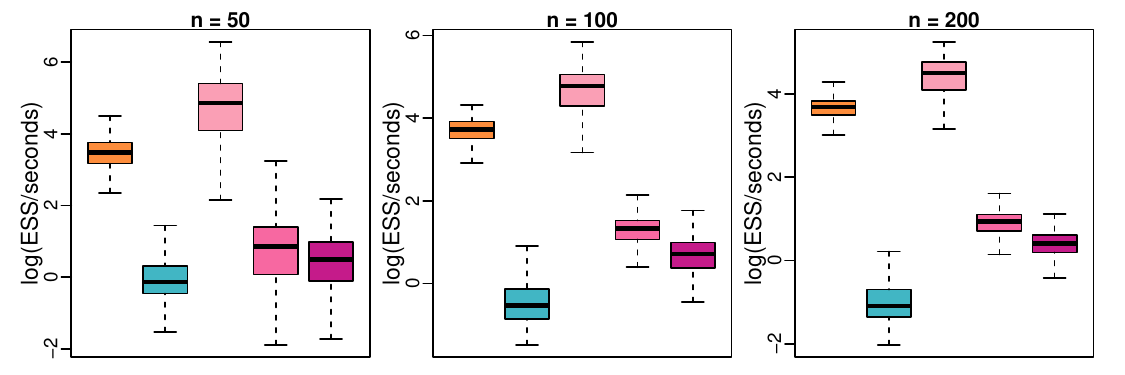} \\ 	\vspace{-0.4cm}
	\subfloat[\centering Sampling efficiency in the Poisson regression models]{\includegraphics[width=10cm]{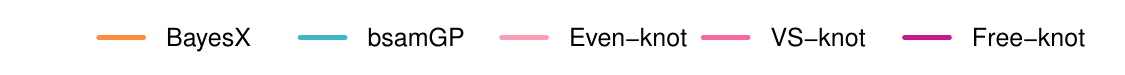}} \\ \vspace{0.6cm}
	\includegraphics[width = 10cm]{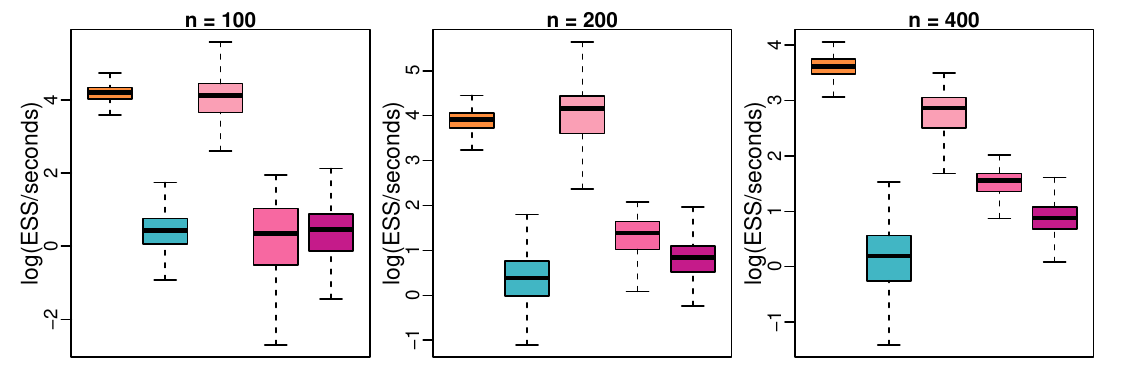} \\ 	\vspace{-0.4cm}
	\subfloat[\centering Sampling efficiency in the Gaussian regression models]{\includegraphics[width=10cm]{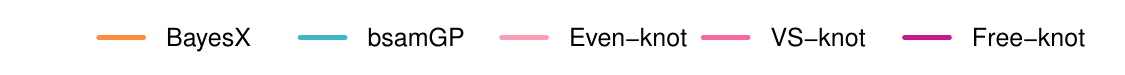}}
	\caption{Logarithm of the effective sample sizes of the joint posterior per second of runtime, in the Poisson regression models with $n=50, 100, 200$ and the Gaussian regression models with $n=100, 200, 400$, obtained from 500 replicated datasets.}
	\label{plot:ESS}
\end{figure}

\section{Simulation for Poisson and Gaussian regression}
\label{sec:sim1_poigau}

Section~\ref{sec:sims} presents simulations focused solely on a nonparametric logistic regression model. Given that the modeling framework encompasses a broad range of exponential family models, investigating the properties of BMS-based methods across other GAMs is important.
In this section, we extend our analysis to include simulation results for Poisson and Gaussian regression models. 
As in Section~\ref{sec:sims}, the observations are generated using the linear predictor $\eta_i=\alpha+\sum_{j=1}^4 f_j(x_{ij})$, where $f_j$ is the centered version of $f_j^\ast$ in \eqref{eqn:functions} and $\alpha$ represents the intercept introduced by centering. For Poisson regression, $Y_i\sim \text{Poi}(e^{\eta_i})$. For Gaussian regression, $Y_i = \eta_i +\epsilon_i$ where $\epsilon_i \sim N(0, 1)$. We generate 500 replicated datasets with sizes $n=50, 100, 200$ for Poisson regression and $n = 100, 200, 400$ for Gaussian regression. For each dataset, we apply the VS-knot spline approach to evaluate the differences among mixtures of g-priors. 
Additionally, we compare BMS-based methods with the intrinsic prior to other Bayesian methods to validate the effectiveness of BMS-based approaches.

The simulation results are presented in Figures~\ref{plot:poi1}--\ref{plot:gau4}. Specifically, Figures~\ref{plot:poi1}--\ref{plot:gau2} display the performance differences among the mixtures of g-priors in the VS-knot splines (Poisson regression in Figures~\ref{plot:poi1}--\ref{plot:poi2} and Gaussian regression in Figures~\ref{plot:gau1}--\ref{plot:gau2}). In contrast, Figures~\ref{plot:poi3}--\ref{plot:gau4} illustrate the comparison between the BMS-based methods and other Bayesian approaches (Poisson regression in Figures~\ref{plot:poi3}--\ref{plot:poi4} and Gaussian regression in Figures~\ref{plot:gau3}--\ref{plot:gau4}).
Since \texttt{Blapsr} requires a known value of $\phi$ in Gaussian regression, it is excluded from the comparison for Gaussian regression. The overall simulation performance aligns with the results from the logistic regression model in Section~\ref{sec:sims}, leading to similar conclusions. As in Section~\ref{sec:sims}, computational efficiency is assessed using the effective sample sizes of the joint posterior per second of runtime. Efficiency measures are summarized in Figure~\ref{plot:ESS}, confirming results consistent with those presented in Section~\ref{sec:sims}.

\section{Simulation for basis construction}
Proposition~\ref{prop:selection} suggests that the natural cubic spline basis in \eqref{eqn:ncs} is advantageous for both VS-knot and free-knot splines. To demonstrate the computational efficiency of this proposed basis construction, we conduct a numerical study. The simulation setups are identical to those described in Sections~\ref{sec:sim1} and~\ref{sec:sim1_poigau}. Along with the basis construction in \eqref{eqn:ncs}, we also consider the commonly used truncated power natural cubic splines as detailed in Equations (5.4) and (5.5) of \citet{hastie2009elements}. Both basis constructions are applied to VS-knot splines in our simulations.

Figure~\ref{plot:basis1} compares the computation times for both basis constructions across over 500 replications. Since the two basis constructions yield identical performance, we focus solely on computational runtime. The measurements were taken using a system equipped with an AMD Ryzen 9 7950X3D CPU. The results indicate that our proposed basis construction leads to faster computation times. Notably, the relative time improvement is more significant in Gaussian regression compared to logistic and Poisson regression models. 
This is due to the more extensive computation required by logistic and Poisson regression models, as they must calculate the maximum likelihood estimates in every MCMC iteration.
In contrast, Gaussian regression is less computationally intensive, as the maximum likelihood estimate is not necessary, allowing a larger proportion of the computation time to be allocated to basis construction.

\begin{figure}[t!]
	\centering
	\includegraphics[width = 9cm]{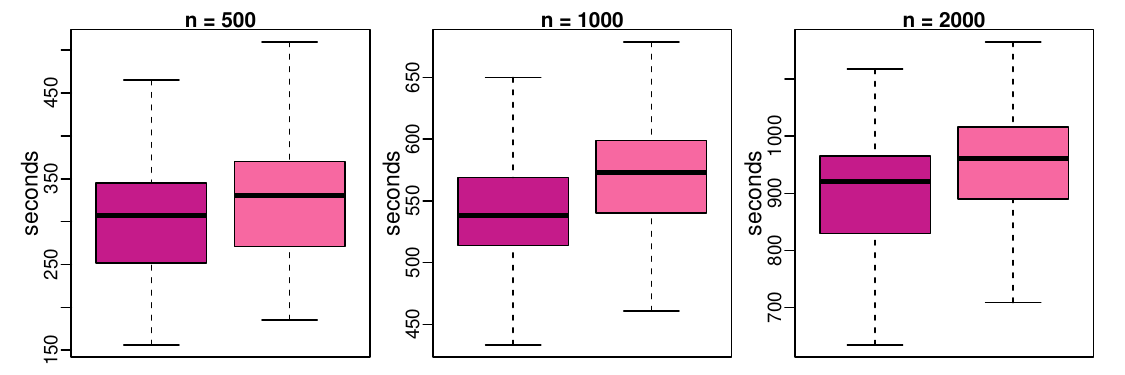} \\ \vspace{-0.4cm}
	\subfloat[\centering Logistic regression]{\includegraphics[width = 10cm]{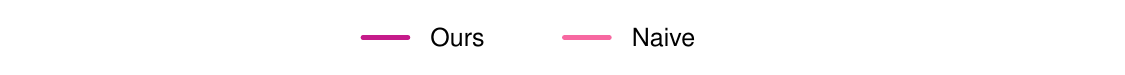}}\\ \vspace{0.4cm}
	\includegraphics[width = 9cm]{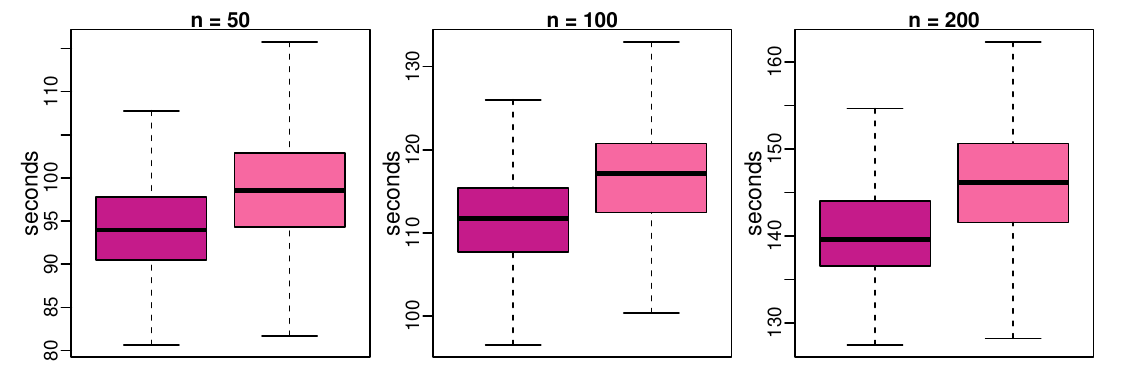} \\ \vspace{-0.4cm}
	\subfloat[\centering Poisson regression]{\includegraphics[width = 10cm]{plots/sim2_basistime_legend.pdf}}\\ \vspace{0.6cm}
	\includegraphics[width = 9cm]{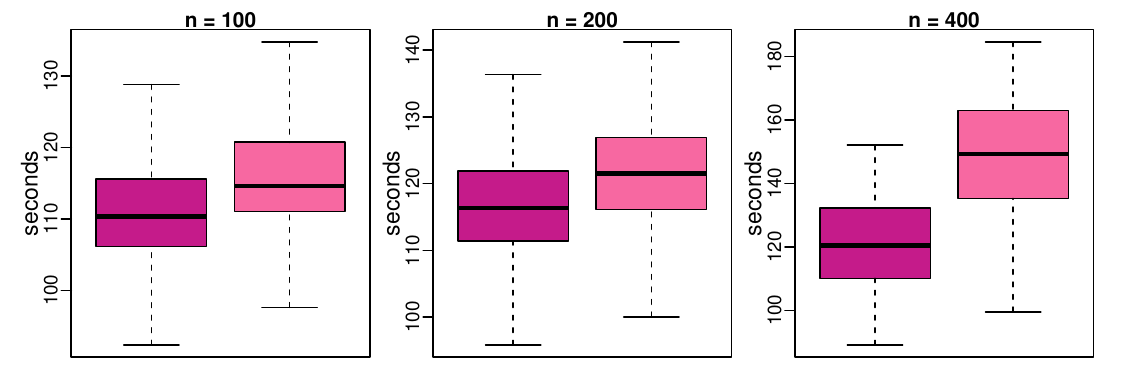} \\ \vspace{-0.4cm}
	\subfloat[\centering Gaussian regression]{\includegraphics[width = 10cm]{plots/sim2_basistime_legend.pdf}}
	\caption{Comparison of computation time between our basis construction (Ours) between the naive one given in  \citet{hastie2009elements} (Naive) using the VS-knot splines.}
	\label{plot:basis1}
\end{figure}

\section{R package \texttt{GAMBMS}}
Here, we demonstrate how to use the R package for BMS-based approaches to GAMs. To install and load our R package using the \texttt{devtools} package available on CRAN, run the following code:
\begin{verbatim}
	devtools::install_github("hun-learning94/gambms")
	library(gambms)
\end{verbatim}
The results presented in Sections~\ref{sec:sims} and \ref{sec:realdata} can be reproduced by running the examples provided on the help page of the R function \texttt{gambms}.

\end{document}